\documentclass[11pt]{article}
\usepackage[a4paper,margin=2.6cm]{geometry}
\usepackage{amsmath,amssymb,amsthm}
\usepackage{algorithm,algpseudocode}
\usepackage{hyperref}

\newtheorem{theorem}{Theorem}[section]
\newtheorem{lemma}[theorem]{Lemma}
\newtheorem{proposition}[theorem]{Proposition}
\newtheorem{corollary}[theorem]{Corollary}
\theoremstyle{definition}
\newtheorem{definition}[theorem]{Definition}
\newtheorem{example}[theorem]{Example}
\theoremstyle{remark}
\newtheorem{remark}[theorem]{Remark}

\newcommand{\Q}{\mathbb Q}
\newcommand{\Z}{\mathbb Z}
\newcommand{\A}{\mathbb A}
\newcommand{\Cl}{\operatorname{Cl}}
\newcommand{\cO}{\mathcal O}
\newcommand{\Fbar}{\overline{F}}
\newcommand{\den}{\operatorname{den}}
\newcommand{\Const}{\operatorname{Const}}
\newcommand{\Res}{\operatorname{Res}}

\newcommand{\Div}{\operatorname{Div}}
\newcommand{\Jac}{\operatorname{Jac}}
\newcommand{\disc}{\operatorname{disc}}
\newcommand{\rk}{\operatorname{rk}}
\newcommand{\dv}{\operatorname{div}}
\newcommand{\Dbar}{\overline{D}}
\newcommand{\fp}{\mathfrak p}
\newcommand{\fm}{\mathfrak m}

\title{Parallel Integration over Simple Radical Extensions}
\author{Sam Blake}
\date{\today}

\begin{document}
\maketitle

\begin{abstract}
The parallel Risch (Risch--Norman) method is a fast heuristic for computing elementary integrals over towers of transcendental extensions. Its justification rests on two structural facts about the integral: a bound on its denominator and a description of the logarithms that can occur. Both are known for purely logarithmic towers (Davenport--Trager) and, in the form of a structure theorem, for arbitrary derivations on multivariate rational function fields (Bronstein). We extend both facts to a simple radical extension $L=K(y)$, $y^m=q$, of such a field. The key observations are that the integral closure of $F[t_1,\dots,t_n]$ in $L$ has an explicit basis, so that all factorisation can remain in a polynomial ring, and that the derivation has a well-defined pole order $\delta_P\in\{0,1,e_P\}$ at every height-one prime $P$, so that pole orders of derivatives shift by $\delta_P$. The denominator of the integral then has the same Hermite-type shape as in the transcendental case, while the admissible logands are precisely the $S$-units of the integral closure for an explicit finite set $S$ of primes; the latter can be larger than the set generated by irreducible polynomials, as the unit $x+\sqrt{x^2+1}$ shows. For $n=1$ we relate these $S$-units to torsion in the Jacobian and, for $m=2$, to the polynomial Pell equation, obtaining a complete description of the logarithmic part in genus~0. We describe the resulting algorithm and give examples.
\end{abstract}

\section{Introduction}

The Risch--Norman method \cite{NormanMoore77,Davenport82,DavenportTrager85,GeddesStefanus89,Norman90,Bronstein05,Bronstein07} computes elementary antiderivatives of functions lying in a differential field $K=F(t_1,\dots,t_n)$ by treating all generators $t_i$ simultaneously: it guesses the denominator of the integral and the arguments of the logarithms that may occur, bounds the degree of the remaining polynomial numerator, and solves a linear system for its coefficients. It is fast and easy to implement, and is used either as a preprocessor for, or in place of, the complete Risch algorithm in several computer algebra systems. It is, however, only a heuristic: to turn it into an algorithm for a class of integrands one must prove that the guesses are exhaustive. Davenport and Trager \cite{DavenportTrager85} proved this for the denominator and the logarithms in purely logarithmic towers; Bronstein \cite[Ch.~10]{Bronstein05} gave a structure theorem for arbitrary derivations on $F(t_1,\dots,t_n)$ and sharpened it for a class of ``simple'' differential fields including nested logarithms.

All of this theory is confined to the case where $K$ is a purely transcendental extension of the constant field, so that $F[t_1,\dots,t_n]$ is a unique factorisation domain. Integration in algebraic extensions is handled by an entirely different family of algorithms going back to Risch \cite{Risch70}, Trager \cite{Trager84} and Bronstein \cite{Bronstein90}, based on integral bases, Hermite reduction and a torsion test in the Jacobian of a curve.

In this paper we show that, for a \emph{simple radical extension} $L=K(y)$ with $y^m=q$, $q\in F[t_1,\dots,t_n]$, the structural results underlying the parallel method survive essentially intact, and we identify precisely what changes. The two ingredients are:

\begin{enumerate}
\item The integral closure $\cO$ of $R=F[t_1,\dots,t_n]$ in $L$ is a free $R$-module with the explicit basis $y^i/\prod_j Q_j^{\lfloor ij/m\rfloor}$, where $q=\prod_j Q_j^{\,j}$ is the squarefree decomposition (Section~\ref{sec:basis}). This is Trager's basis for the univariate case \cite{Trager84}, used also in \cite{Bronstein98}; we give a short proof that it remains an integral basis in the multivariate setting. As a consequence every element of $L$ has a canonical denominator in $R$, and all gcd, squarefree and splitting factorisations required by the parallel method take place in the unique factorisation domain $R$.
\item At every height-one prime $P$ of $\cO$ lying over a normal prime of $R$, the derivation $D$ has a pole order $\delta_P\in\{1,e_P\}$, where $e_P$ is the ramification index, and $v_P(Dg)=v_P(g)-\delta_P$ whenever $v_P(g)\neq0$ (Section~\ref{sec:valuations}). This single lemma replaces both Davenport--Trager's Theorem~1 and Bronstein's Lemma~10.2.1, and it yields the same Hermite-type denominator bound $\den(\int f)\mid s\prod_j d_j^{\,j-1}$ as in the transcendental case (Section~\ref{sec:structure}).
\end{enumerate}

The description of the logarithms is where the algebraic case genuinely differs. In a unique factorisation domain the logands can be taken to be irreducible polynomials, and Davenport--Trager's Lemma~3 lists them. In $\cO$, which is generally not a UFD and may have non-constant units, the correct statement (Theorem~\ref{thm:sunits}) is that the logands are $S$-units of $\cO$, where $S$ is the set of primes at which the integrand has a pole of order at least $\delta_P$, together with the special primes. For example $\int dx/\sqrt{x^2+1}=\log\bigl(x+\sqrt{x^2+1}\bigr)$, and $x+\sqrt{x^2+1}$ is a unit of norm~1. For $n=1$ we describe $S$-units via an exact sequence with the Jacobian (Section~\ref{sec:curve}); for $m=2$ this is the polynomial Pell equation and, by a theorem of Abel, the periodicity of the continued fraction of $\sqrt q$. In genus $0$ the logarithmic part is thereby completely determined, and in general it reduces to Trager's torsion test. For $n=1$ we also prove exact degree bounds for the numerator of the rational part, so that the method becomes a complete algorithm over $\bar F(x,y)$; for towers we prove the degree bound in the top variable and show by example that the classical ``$+1$'' bound fails for lower variables, even in irreducibly logarithmic towers.

Section~\ref{sec:degree} treats Risch--Norman condition (ii): for $n=1$ the same valuation argument at the places at infinity gives exact, provable degree bounds (Corollary~\ref{cor:degbound}), which together with the results above make the method complete over $\bar F(x,y)$ (Theorem~\ref{thm:complete}); for towers we prove a transfer theorem in the top variable (Theorem~\ref{thm:top}) and give explicit nested-logarithmic examples in which the integral has degree two more than the integrand in a lower variable, violating condition (ii) of \cite{DavenportTrager85}. Section~\ref{sec:algorithm} turns these results into an algorithm which is Bronstein's \textsc{ParallelIntegrate} with the numerator ranging over $\cO$ and the logand list extended by $S$-units. One can view the outcome as ``Trager's algorithm with Hermite reduction replaced by undetermined coefficients''. Section~\ref{sec:examples} gives examples computed with a prototype implementation, and Section~\ref{sec:conclusion} lists open problems.

\section{Preliminaries}\label{sec:prelim}

All fields have characteristic $0$. We follow the notation of \cite{Bronstein05}. A differential field is a field $K$ with a derivation $D$; $\Const_D(K)=\{c\in K: Dc=0\}$. Throughout,
\[
K=F(t_1,\dots,t_n),\qquad DF\subseteq F,\qquad \Const_D(K)=F,
\]
with each $t_i$ transcendental over $F(t_1,\dots,t_{i-1})$, and $R=F[t_1,\dots,t_n]$, a UFD. The \emph{denominator of $D$}, $\den_D(K)\in R$, is the least common multiple of the denominators of $Dt_1,\dots,Dt_n$; then $\Dbar_R:=\den_D(K)\,D$ is a derivation of $R$ \cite[\S10.1]{Bronstein05}. An irreducible $p\in R$ is \emph{normal} if $\gcd(p,\Dbar_Rp)=1$ and \emph{special} if $p\mid\Dbar_Rp$; thus $p$ is normal iff $p\nmid\den_D(K)$ and $p\nmid Dp$, and every irreducible is one or the other. A \emph{splitting factorisation} $d=d_sd_n$ of $d\in R$ has $d_s$ special and every squarefree factor of $d_n$ normal; it is computed by Bronstein's \textsc{SplitFactor} \cite[\S10.1]{Bronstein05}. We write $\Fbar$ for the algebraic closure of $F$.

We use the strong Liouville theorem in the following form \cite[Thm.~5.5.3]{Bronstein05}: if $f$ in a differential field $L$ with $\Const_D(L)=F$ has an elementary integral over $L$, then there are $v\in L$, $c_1,\dots,c_r\in\Fbar$ and $u_1,\dots,u_r\in L(c_1,\dots,c_r)^*$ with
\begin{equation}\label{eq:liouville}
f = Dv+\sum_{i=1}^r c_i\frac{Du_i}{u_i}.
\end{equation}

\section{Simple radical extensions and their integral basis}\label{sec:basis}

\subsection{Normalisations}

Let $m\ge2$ and $q\in R\setminus F$, and let $L=K(y)$ with $y^m=q$. Write the squarefree decomposition of $q$ as
\begin{equation}\label{eq:sqfree}
q=\prod_{j\ge1}Q_j^{\,j},\qquad Q:=\prod_j Q_j,
\end{equation}
with the $Q_j\in R$ squarefree and pairwise coprime, and $Q$ the squarefree part of $q$. The decomposition \eqref{eq:sqfree} is unchanged under extension of the constant field. We impose two normalisations.

\begin{enumerate}
\item[(N1)] $q$ is $m$-th-power-free: $Q_j=1$ for $j\ge m$. (Otherwise replace $y$ by $y/\prod_{j\ge m}Q_j^{\lfloor j/m\rfloor}$.)
\item[(N2)] $y^m-q$ is irreducible over $\Fbar K$; equivalently, by Capelli's theorem \cite{Lang02}, $\gcd\bigl(m,\{\,j: Q_j\ne1\,\}\bigr)=1$.
\end{enumerate}
To see the equivalence in (N2): $q\in(\Fbar K)^p$ for a prime $p$ iff every exponent $j$ with $Q_j\neq1$ is divisible by $p$; and $-4\in(\Fbar K)^4$ since $-4=(1+i)^4$. If (N2) fails, $L$ either is a smaller radical extension or contains new constants, and one should replace $(q,m)$ by the radical actually generating $L$ over the algebraic closure of $F$ in $L$.

\begin{lemma}\label{lem:constants}
Under \textup{(N2)}, $F$ is algebraically closed in $L$, and $\Const_D(L)=F$ for the unique extension of $D$ to $L$, given by
\begin{equation}\label{eq:Dy}
\frac{Dy}{y}=\frac1m\frac{Dq}{q}=\frac1m\sum_j j\,\frac{DQ_j}{Q_j}.
\end{equation}
\end{lemma}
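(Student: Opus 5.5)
We prove the three assertions in turn: that $F$ is algebraically closed in $L$, that $D$ extends uniquely by \eqref{eq:Dy}, and that $\Const_D(L)=F$.

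\emph{Algebraic closedness.} Let $c\in L$ be algebraic over $F$. Then $c\in\Fbar$, so $F(c)\subseteq L$ and $[F(c):F]=[K(c):K]$, because $K=F(t_1,\dots,t_n)$ is purely transcendental and $F$ is algebraically closed in $K$. By (N2), $y^m-q$ is irreducible over $\Fbar K$, and a fortiori over $K(c)\subseteq\Fbar K$. Hence
\[
[L:K(c)]=[K(c,y):K(c)]=m=[L:K],
\]
which forces $[K(c):K]=1$. So $c\in K$, and since $c$ is algebraic over $F$ and $F$ is algebraically closed in $K$, we get $c\in F$.

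\emph{Extension of $D$.} The field $L=K(y)$ is finite and separable over $K$ (characteristic $0$), so $D$ extends uniquely to $L$. Differentiating $y^m=q$ gives $my^{m-1}Dy=Dq$. Dividing by $my^m=mq$ yields the first equality in \eqref{eq:Dy}. The second equality is the logarithmic derivative of $q=\prod_j Q_j^{\,j}$.

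\emph{Constants.} Let $c\in L$ with $Dc=0$. The standard fact is that constants of an algebraic extension are algebraic over the constants of the base. To see this, let $P(X)=X^k+a_{k-1}X^{k-1}+\cdots+a_0$ be the minimal polynomial of $c$ over $K$. Applying $D$ to $P(c)=0$ and using $Dc=0$ gives $\sum_i (Da_i)c^i=0$, a polynomial relation of degree less than $k$. By minimality all $Da_i=0$, so each $a_i\in\Const_D(K)=F$. Thus $c$ is algebraic over $F$, and by the first part $c\in F$. Conversely $F\subseteq\Const_D(L)$, since $D$ vanishes on $F$: $F=\Const_D(K)$ by hypothesis, even though the standing assumptions only say $DF\subseteq F$.

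The one delicate point is the degree argument in the first step. It is crucial that (N2) is irreducibility over $\Fbar K$ rather than over $K$. Otherwise, as in the example $y^2=-q^2$-type situations, $L$ could contain a new constant such as $i$. Everything else is routine.
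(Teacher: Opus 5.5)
Your proof is correct and follows essentially the same route as the paper's: a degree count in which (N2), irreducibility over $\Fbar K$, rules out any new algebraic element of $L$, followed by the fact that constants of an algebraic extension are algebraic over the base constants. The only differences are cosmetic: you argue element-by-element with $K(c)$ rather than with a finite subextension $F'K$, and you prove the cited fact \cite[Cor.~3.3.1]{Bronstein05} directly via the minimal polynomial instead of quoting it.
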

\begin{proof}
If $F\subsetneq F'\subseteq L$ with $[F':F]<\infty$, then $F'K\subseteq L$ and $[L:F'K]=m/[F':F]<m$, so $y$ has degree $<m$ over $F'K\subseteq\Fbar K$, contradicting (N2). Constants of an algebraic extension are algebraic over the constants of the base \cite[Cor.~3.3.1]{Bronstein05}, hence $\Const_D(L)\subseteq L\cap\Fbar=F$.
\end{proof}

Note that $y$ need not be transcendental over the intermediate fields $F(t_1,\dots,t_i)$: for $K=\Q(x,e^x)$ and $q=e^x$, $y=e^{x/2}$ is transcendental over $\Q(x)$ but algebraic over $K$. The theory must allow $y$ to behave both like a branch and like a special element.

\subsection{The integral basis}

\begin{proposition}\label{prop:basis}
Let $E_i:=\prod_j Q_j^{\lfloor ij/m\rfloor}$ and $w_i:=y^i/E_i$ for $0\le i<m$. Then
\[
\cO:=\bigoplus_{i=0}^{m-1}R\,w_i
\]
is the integral closure of $R$ in $L$. Moreover, for $0\le i,k<m$,
\begin{equation}\label{eq:multtable}
w_iw_k=w_{(i+k)\bmod m}\prod_j Q_j^{\,\lfloor (i+k)j/m\rfloor-\lfloor ij/m\rfloor-\lfloor kj/m\rfloor},
\end{equation}
with all exponents nonnegative.
\end{proposition}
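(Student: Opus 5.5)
I will prove the statement in three steps: the multiplication table, integrality of the $w_i$, and maximality.

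\emph{Multiplication table.} Identity \eqref{eq:multtable} is a direct computation. Write $s=(i+k)\bmod m$, so $y^{i+k}=y^s\,q^{\epsilon}$ with $\epsilon\in\{0,1\}$. Expanding $q=\prod_j Q_j^{j}$ then gives the exponent of $Q_j$ as $\epsilon j+\lfloor sj/m\rfloor-\lfloor ij/m\rfloor-\lfloor kj/m\rfloor=\lfloor (i+k)j/m\rfloor-\lfloor ij/m\rfloor-\lfloor kj/m\rfloor$, using $\lfloor (s+\epsilon m)j/m\rfloor=\lfloor sj/m\rfloor+\epsilon j$. These exponents are nonnegative by the superadditivity $\lfloor a+b\rfloor\ge\lfloor a\rfloor+\lfloor b\rfloor$. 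Hence $\cO$ is closed under multiplication and is a ring containing $R$. It is a free $R$-module because $1,y,\dots,y^{m-1}$ is a $K$-basis of $L$ by (N2), and the $w_i$ differ from the $y^i$ by nonzero scalars.

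\emph{Integrality.} Each $w_i$ is integral over $R$, since $w_i^m=q^i/E_i^m=\prod_j Q_j^{\,ij-m\lfloor ij/m\rfloor}\in R$. So $\cO$ is contained in the integral closure $\overline{R}$.

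\emph{Maximality.} This is the main step. Take $z=\sum_i a_i y^i\in L$ integral over $R$, with $a_i\in K$. I claim $a_iE_i\in R$ for each $i$. Since $R$ is a UFD, it suffices to show $v_p(a_i)\ge -v_p(E_i)$ for every irreducible $p\in R$, working in the DVR $R_{(p)}$ and its completion.

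If $p\nmid q$, the extension is unramified at $p$ (the discriminant of $y^m-q$ is $\pm m^mq^{m-1}$, a unit at $p$ in characteristic $0$). Then $R_{(p)}[y]$ is integrally closed, so $a_i\in R_{(p)}$.

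If $p\mid Q_j$ with $1\le j<m$, set $g=\gcd(j,m)$, and pass to $\widehat{K}_p(\zeta_m)$ or simply use valuations. Let $P$ be any prime of $\overline{R}$ above $p$ with ramification $e=m/g$. Then $v_P(y)=j/g$ in units where $v_P(p)=e$, so normalising $v_P(p)=1$ gives $v_P(y)=j/m$. The terms $a_iy^i$ have $v_P$ equal to $v_p(a_i)+ij/m$. Terms whose indices $i$ differ modulo $e$ have distinct fractional parts, so they cannot cancel. Within a residue class $i\equiv i_0\pmod e$, the terms are $y^{i_0}$ times an element of $\widehat K_p(y^e)$, an unramified degree-$g$ extension, and so the basis $y^{ie}$ of it is handled as in the unramified case after dividing by the appropriate power of $p$. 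This is the step where care is needed. The cleanest route may be the trace argument: for $0\le i<m$, $\mathrm{Tr}_{L/K}(z\,y^{m-i}\cdot(\text{suitable }Q\text{-monomial})/q)=m\,a_i\cdot(\ldots)$, since $\mathrm{Tr}(y^k)=0$ for $m\nmid k$. Choosing the multiplier to be an element of $\cO$ scaled so that the product $z\cdot w_{m-i}$ lies in $\overline{R}$ gives $\mathrm{Tr}(z w_{m-i})=m\,a_i\,q/(E_{m-i})\in R$ (for $i\neq0$; for $i=0$ use $\mathrm{Tr}(z)=ma_0$). Because $R$ is integrally closed, the traces of integral elements lie in $R$. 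Thus $a_i\in E_{m-i}q^{-1}R$. The identity $\lfloor ij/m\rfloor+\lfloor (m-i)j/m\rfloor=j-1$ when $m\nmid ij$ gives $v_p(q/E_{m-i})=v_p(E_i)+1$ at primes $p\mid Q_j$ with $m\nmid ij$. This leaves a discrepancy of one power of $p$.

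To remove that extra factor, I would use the local valuation argument above. With $v_P(z)\ge0$, the non-cancellation of distinct fractional parts $ij/m$ modulo $1$ (each class meeting the index set in a set handled by the trace bound with the unit exponent) forces $v_p(a_i)+ij/m\ge0$, i.e.\ $v_p(a_i)\ge-\lfloor ij/m\rfloor$. The fractional-parts argument is exactly what I expect to be delicate when $\gcd(j,m)>1$. There I would first pass to the subextension generated by $y^{m/g}$, which is unramified at $p$, then apply the totally ramified argument over it. Combining the local bounds over all $p$ gives $a_iE_i\in R$, so $z\in\cO$.
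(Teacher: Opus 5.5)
Your route differs from the paper's. Once the branch-prime step is actually carried out, it gives a complete and more elementary proof.

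\textbf{Comparison with the paper.} The paper argues by commutative algebra. $\cO$ is free over $R$, hence Cohen--Macaulay, so by Serre's criterion only the localisations at height-one primes $(p)$ matter. At $p\mid Q_j$ it computes $e_P=m/\gcd(j,m)$ in the completion, obtains $v_p(\disc)=m-g$ from Dedekind's different theorem, and matches this with the discriminant of $(w_i)$ via the lattice-point count.

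You replace Serre's criterion by a coordinatewise reduction: $a_iE_i\in R$ iff $a_iE_i\in R_{(p)}$ for all irreducible $p$, using freeness of $(w_i)$ and $R=\bigcap_pR_{(p)}$. You replace the discriminant comparison by a direct bound on the coefficients.

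Your trace computation only places $\overline R$ inside the trace-dual of $\cO$. It therefore gives just $v_p(a_i)\ge-\lceil ij/m\rceil$. The paper removes the remaining power of $p$ with the different; you remove it by valuations, so the trace paragraph can simply be dropped.

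Each approach buys something different. Yours is self-contained and describes $\cO_{(p)}$ explicitly; it is the same residue-class mechanism the paper itself uses at infinity in Lemma~\ref{lem:vandermonde}. The paper's yields the ramification indices and discriminant that feed Corollary~\ref{cor:local}.

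\textbf{What the decisive step should say.} As written, the within-class step is only asserted, and the parenthetical in your last paragraph does not parse. Put $g=\gcd(j,m)$, $e=m/g$, $j'=j/g$, and write $q=p^ju'$ with $u'$ a $p$-unit. Split $z=\sum_{i_0=0}^{e-1}z_{i_0}$ with $z_{i_0}=y^{i_0}\sum_{k<g}a_{i_0+ke}y^{ke}$.

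Work in $K(y^e)$ itself rather than in the completion, where it need not be a field. Since $y^e/p^{j'}$ is a root of $Z^g-u'$, the ring $R_{(p)}[y^e/p^{j'}]$ is \'etale over $R_{(p)}$. So it is the integral closure of $R_{(p)}$ in $K(y^e)$, and every prime of $K(y^e)$ over $p$ is unramified.

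Hence, at every prime $\mathfrak P$ of $L$ over $p$, normalised by $v_{\mathfrak P}(p)=1$, one has $v_{\mathfrak P}(z_{i_0})\in i_0j'/e+\Z$. These cosets are distinct, so $v_{\mathfrak P}(z)=\min_{i_0}v_{\mathfrak P}(z_{i_0})$. Integrality of $z$ therefore gives $v_{\mathfrak P}(z_{i_0})\ge0$ for every class and every $\mathfrak P$.

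Then $p^{\lfloor i_0j'/e\rfloor}y^{-i_0}z_{i_0}\in K(y^e)$ has nonnegative integer valuation at every prime over $p$, so it lies in $R_{(p)}[y^e/p^{j'}]$. Reading off coefficients in the basis $(y^e/p^{j'})^k$ gives $v_p(a_{i_0+ke})\ge-\lfloor i_0j'/e\rfloor-kj'=-\lfloor ij/m\rfloor$.

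\textbf{Two remarks.} First, you never need $e_{\mathfrak P}=e$ in advance, which you assumed without proof; only $v_{\mathfrak P}(y)=\tfrac jm v_{\mathfrak P}(p)$ is used. Second, using all primes over $p$ simultaneously is essential, because for $g>1$ the terms in one class can cancel at a single prime. For example, with $\sqrt{-1}\in F$, $m=4$ and $q=x^2(x-1)$, the element $(y^2/x-\sqrt{-1})/x$ is integral at one prime over $x$ but not at the other, and it violates $v_x(a_2)\ge-1$.
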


\begin{proof}
Each $w_i$ is integral, since $w_i^m=\prod_jQ_j^{\,ij-m\lfloor ij/m\rfloor}\in R$. Formula \eqref{eq:multtable} is immediate for $i+k<m$; for $i+k\ge m$ use $y^{i+k}=q\,y^{i+k-m}$ and $j+\lfloor (i+k-m)j/m\rfloor=\lfloor (i+k)j/m\rfloor$. The exponents are nonnegative because $\lfloor a+b\rfloor\ge\lfloor a\rfloor+\lfloor b\rfloor$. Hence $\cO$ is a ring, and it is a free $R$-module, so it is Cohen--Macaulay and in particular satisfies Serre's condition $S_2$. By Serre's criterion it remains to show that $\cO$ is regular in codimension one, i.e.\ that $\cO_{(\fp)}$ is integrally closed for every height-one prime $\fp=(p)$ of $R$.

If $p\nmid q$, then $R_{(p)}[y]/(y^m-q)$ is \'etale over the discrete valuation ring $R_{(p)}$, its discriminant $\pm m^mq^{m-1}$ being a unit, so it is integrally closed and equals $\cO_{(p)}$.

If $p\mid Q_j$, put $g=\gcd(j,m)$ and $e=m/g$. We claim every prime $P$ of $L$ above $p$ has ramification index $e$. Let $\widehat K$ be the completion of $K$ at $p$ and write $q=p^ju'$ with $u'$ a unit of $R_{(p)}$. The element $z=y^e/p^{j/g}$ satisfies $z^g=u'$, so $\widehat K(z)/\widehat K$ is unramified (the residue characteristic is $0$); and $y^e=p^{j/g}z$ with $\gcd(j/g,e)=1$ forces $\widehat K(z)(y)/\widehat K(z)$ to be totally ramified of degree $e$. Since the primes of $L$ above $p$ correspond to the factors of $L\otimes_K\widehat K$, the claim follows. Ramification is tame, so by Dedekind's different theorem
\[
v_p\bigl(\disc(\cO_{(p)}/R_{(p)})\bigr)=\sum_{P\mid p}f_P(e_P-1)=\frac{e-1}{e}\,m=m-g .
\]
On the other hand, the discriminant of the basis $(w_i)$ is that of $(y^i)$ divided by $\prod_iE_i^2$, so its $p$-valuation is
\[
j(m-1)-2\sum_{i=0}^{m-1}\Bigl\lfloor\frac{ij}{m}\Bigr\rfloor
= j(m-1)-\bigl((m-1)(j-1)+g-1\bigr)=m-g,
\]
using the classical lattice-point count $\sum_{i=0}^{m-1}\lfloor ij/m\rfloor=\tfrac12\bigl((m-1)(j-1)+\gcd(m,j)-1\bigr)$. A basis of integral elements whose discriminant has the valuation of the discriminant of the extension is a local integral basis, so $\cO_{(p)}$ is the integral closure of $R_{(p)}$.
\end{proof}

We record the local information obtained in the proof.

\begin{corollary}\label{cor:local}
Let $P$ be a height-one prime of $\cO$ lying over the irreducible $p\in R$.
\begin{enumerate}
\item If $p\nmid Q$ then $e_P=1$ and $v_P(g)=v_p(g)$ for $g\in K$.
\item If $p\mid Q_j$ then $e_P=m/\gcd(j,m)$, $v_P(g)=e_Pv_p(g)$ for $g\in K$, $v_P(y)=je_P/m$, and $v_P(w_i)=e_P\{ij/m\}=(ij\bmod m)/\gcd(j,m)$, a nonnegative integer.
\end{enumerate}
\end{corollary}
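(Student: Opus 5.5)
The plan is to read both parts off the local analysis already carried out in the proof of Proposition~\ref{prop:basis}, using the standard relation between valuations upstairs and downstairs. Since $\cO$ is integral over $R$, every height-one prime $P$ of $\cO$ contracts to a height-one prime $(p)$ of $R$, by going-down for the normal domain $R$ and incomparability. The localisation $\cO_P$ is a discrete valuation ring dominating $R_{(p)}$. Hence $v_P|_K=e_Pv_p$ by the definition of the ramification index. This already gives the formulas $v_P(g)=v_p(g)$ and $v_P(g)=e_Pv_p(g)$ for $g\in K$, once $e_P$ is known in each case.

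For part 1, with $p\nmid Q$ (equivalently $p\nmid q$), I will use the étaleness established in the proof of Proposition~\ref{prop:basis}. There $R_{(p)}[y]/(y^m-q)$ was shown to be étale over $R_{(p)}$, because the discriminant $\pm m^mq^{m-1}$ is a $p$-unit. An étale extension of a DVR is unramified, so $e_P=1$. Alternatively, $v_p(\disc)=0=\sum f_P(e_P-1)$ forces every $e_P=1$.

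For part 2, with $p\mid Q_j$, the proof of the proposition already showed, via the completion $\widehat K$ and the element $z=y^e/p^{j/g}$, that every $P$ above $p$ has $e_P=e=m/\gcd(j,m)$. The remaining valuations then follow by direct computation.
\begin{itemize}
\item For $y$: from $y^m=q$ and $v_p(q)=j$ we get $m\,v_P(y)=v_P(q)=e_Pj$, so $v_P(y)=je_P/m=j/\gcd(j,m)$.
\item For $w_i$: $p$ divides exactly one of the pairwise coprime $Q_{j'}$, namely $Q_j$, and does so to the first power since $Q_j$ is squarefree. Hence $v_p(E_i)=\lfloor ij/m\rfloor$.
\item Therefore $v_P(w_i)=i\,je_P/m-e_P\lfloor ij/m\rfloor=e_P\{ij/m\}$.
\item Writing $ij=m\lfloor ij/m\rfloor+(ij\bmod m)$ gives $e_P\{ij/m\}=(ij\bmod m)\,e_P/m=(ij\bmod m)/\gcd(j,m)$.
\end{itemize}

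This last quantity is a nonnegative integer because $\gcd(j,m)$ divides both $ij$ and $m$, and hence divides $ij\bmod m$. Nonnegativity is also consistent with the integrality of $w_i$ established in Proposition~\ref{prop:basis}.

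The only point needing care is the identification of $e_P$ for each individual prime $P$, rather than merely on average. That identification is exactly the claim proved inside Proposition~\ref{prop:basis} through the correspondence between primes above $p$ and factors of $L\otimes_K\widehat K$. I will cite it rather than reprove it; everything else is routine arithmetic.
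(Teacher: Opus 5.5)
Your proposal is correct and follows the paper's approach. The paper gives no separate argument: it simply records the ramification indices $e_P$ obtained in the proof of Proposition~\ref{prop:basis} (étaleness when $p\nmid q$, and the completion argument when $p\mid Q_j$). The formulas for $v_P(y)$ and $v_P(w_i)$ then follow from $v_p(q)=j$ and $v_p(E_i)=\lfloor ij/m\rfloor$ by exactly the arithmetic you carry out.
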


Since $\cO$ is free over $R$, every $f\in L$ has a unique representation
\begin{equation}\label{eq:rep}
f=\frac{\sum_{i=0}^{m-1}a_iw_i}{d},\qquad a_i,d\in R,\quad \gcd(a_0,\dots,a_{m-1},d)=1,
\end{equation}
with $d$ normalised (say monic in a fixed ordering); we call $d=\den(f)$ the \emph{denominator} of $f$. Because $\cO$ is a Krull domain, $\cO=\bigcap_P\cO_P$ over its height-one primes, and $\cO_{(p)}=\bigcap_{P\mid p}\cO_P$. Hence, if $p^k\,\|\,\den(f)$, then $v_P(f)\ge -ke_P$ for all $P\mid p$, and $v_P(f)\le-(k-1)e_P-1$ for at least one $P\mid p$.

\subsection{The derivation on $\cO$}

From \eqref{eq:Dy},
\begin{equation}\label{eq:Dw}
\frac{Dw_i}{w_i}=\sum_j\Bigl\{\frac{ij}{m}\Bigr\}\frac{DQ_j}{Q_j},\qquad \{x\}:=x-\lfloor x\rfloor,
\end{equation}
so that $D\cO\subseteq H^{-1}\cO$ with
\begin{equation}\label{eq:H}
H:=\operatorname{lcm}\bigl(\den_D(K),\,Q\bigr).
\end{equation}
Thus the squarefree part of $q$ joins the denominator of the derivation, exactly as in the logarithmic case $Dt=Dq/q$ of \cite[p.~304]{Bronstein05}.

\section{Valuations and the derivation}\label{sec:valuations}

\begin{definition}
A height-one prime $P$ of $\cO$ is \emph{tame} if the irreducible $p\in R$ below it is normal (i.e.\ $p\nmid\den_D(K)$ and $p\nmid Dp$). For tame $P$ put
\[
\delta_P:=\begin{cases}1&\text{if }p\nmid Q,\\ e_P&\text{if }p\mid Q.\end{cases}
\]
\end{definition}

\begin{lemma}[pole order of $D$]\label{lem:poleorder}
Let $P$ be tame with uniformiser $\pi$. Then:
\begin{enumerate}
\item $v_P(D\pi/\pi)=-\delta_P$;
\item $v_P(Du)\ge 1-\delta_P$ for every unit $u$ of $\cO_P$;
\item for every $g\in L^*$, $v_P(Dg)\ge v_P(g)-\delta_P$, with equality if $v_P(g)\ne0$. More precisely, $Dg/g=v_P(g)\,D\pi/\pi+r$ with $v_P(r)\ge1-\delta_P$.
\end{enumerate}
\end{lemma}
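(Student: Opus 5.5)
The approach is to choose an explicit uniformiser $\pi$, compute $D\pi/\pi$ directly, and then deduce (2) and (3) from a description of the units of $\cO_P$. The valuation $v_P(D\pi/\pi)$ does not depend on which uniformiser is chosen, because replacing $\pi$ by $u\pi$ changes $D\pi/\pi$ by $Du/u$, which by (2) has valuation $\ge 1-\delta_P>-\delta_P$ once (2) is known. So we first prove (2) and then handle (1) with a convenient $\pi$.

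\emph{Case $p\nmid Q$.} Here $e_P=1$ by Corollary~\ref{cor:local}, so $p$ itself is a uniformiser. Since $p$ is normal, $p\nmid Dp$ and $p\nmid\den_D(K)$, which gives $v_P(Dp/p)=-1$. For (2), $\cO_{(p)}$ is étale over $R_{(p)}$ and $p\nmid H$, so $D\cO_{(p)}\subseteq\cO_{(p)}$ by \eqref{eq:Dw}. The derivation therefore preserves $\cO_P$ (a localisation of $\cO_{(p)}$), and $v_P(Du)\ge0=1-\delta_P$.

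\emph{Case $p\mid Q_j$.} Put $e=e_P$ and $g=\gcd(j,m)$. Since $\gcd(j/g,e)=1$, we can choose integers $a,b$ with $a\,je/m+b\,e=1$. Then $\pi:=y^ap^b$ has $v_P(\pi)=1$ by Corollary~\ref{cor:local}, and
\[
\frac{D\pi}{\pi}=\frac{a}{m}\frac{Dq}{q}+b\frac{Dp}{p}.
\]
Write $q=p^j u'$ with $u'$ a $p$-adic unit. Then $Dq/q=j\,Dp/p+Du'/u'$, and so $D\pi/\pi=(aj/m+b)Dp/p+\frac{a}{m}Du'/u'=\frac1e Dp/p+\frac am Du'/u'$. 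The first term has $v_P=e\,v_p(Dp/p)=-e$ by normality. The second term lies in $R_{(p)}$, since $u'\in R_{(p)}^*$ and $p\nmid\den_D(K)$, so its valuation is $\ge0$. This gives (1) with $\delta_P=e$.

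For (2) in this case, we must show that $D$ maps $\cO_P$ into $\pi^{1-e}\cO_P$. I would argue via $D\cO_{(p)}\subseteq p^{-1}\cO_{(p)}$ (from \eqref{eq:Dw}, since $p\,\|\,H$). This gives $v_P(Dx)\ge -e$ for $x\in\cO_P$, but we need $-e+1$, and this is the \textbf{main obstacle}. The plan is to use \eqref{eq:Dw} more precisely. The pole of $Dw_i$ comes from $w_i\{ij/m\}Dp/p$ (up to an $R_{(p)}$-regular part), and its valuation is $v_P(w_i)-e\ge-e$, with equality only when $v_P(w_i)=0$, i.e.\ only when $\{ij/m\}=0$, in which case the coefficient vanishes. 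Hence $v_P(Dw_i)\ge 1-e$. Combining this with $D(R_{(p)})\subseteq R_{(p)}$ and the Leibniz rule gives $v_P(Dx)\ge1-e$ for all $x\in\cO_{(p)}$. Elements of $\cO_P$ are quotients $x/s$ with $s\in\cO_{(p)}$, $v_P(s)=0$, and the quotient rule preserves the bound. In particular this applies to units $u$. (If several $P$ lie over $p$, $s$ may have positive valuation at the other primes; that is harmless because we only measure $v_P$.)

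Finally, for (3) write $g=\pi^k u$ with $k=v_P(g)$ and $u$ a unit. Then $Dg/g=k\,D\pi/\pi+Du/u$, and $r=Du/u$ has $v_P(r)\ge1-\delta_P$ by (2). If $k\ne0$, the first term has valuation exactly $-\delta_P$ by (1) (we are in characteristic $0$, so $k\neq0$ in $L$), and it dominates $r$. Hence $v_P(Dg/g)=-\delta_P$, with $\ge$ holding in general.
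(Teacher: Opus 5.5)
Your proof is correct and follows the paper's argument essentially step for step. You take $\pi=p$, respectively $\pi=y^a p^b$, for (1); for (2) you use that the pole term $w_i\{ij/m\}\,Dp/p$ in $Dw_i$ either vanishes or has valuation $\ge 1-e$, because $v_P(w_i)=(ij\bmod m)/g\ge 1$, and then pass to quotients; and for (3) you write $g=\pi^k u$. Your preliminary remark that $v_P(D\pi/\pi)$ does not depend on the choice of uniformiser, which follows from (2), is a small point the paper leaves implicit. It is exactly what justifies applying (1) to an arbitrary uniformiser $\pi$, as the statement requires.
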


\begin{proof}
Since $p\nmid\den_D(K)$, $D$ maps $R_{(p)}$ into itself, and since $p\nmid Dp$ we have $v_p(Dp)=0$, hence $v_P(Dp)=0$ and $v_P(Dp/p)=-e_P$.

\emph{Unramified case} ($p\nmid Q$, $\delta_P=1$). Take $\pi=p$; then (1) is $v_P(Dp/p)=-1$. For $a=\sum a_iw_i\in\cO$,
\begin{equation}\label{eq:Da}
Da=\sum_i(Da_i)w_i+\sum_i a_i\,\frac{Dw_i}{w_i}\,w_i,
\end{equation}
and both $Da_i\in R_{(p)}$ and $Dw_i/w_i$ (by \eqref{eq:Dw}, as $p\nmid Q$) are $P$-integral, so $D\cO_P\subseteq\cO_P$ and (2) holds with $1-\delta_P=0$.

\emph{Branch case} ($p\mid Q_j$, $\delta_P=e:=e_P=m/g$, $g=\gcd(j,m)$). Choose integers $\alpha,\beta$ with $\alpha(j/g)+\beta e=1$ and set $\pi=y^\alpha p^\beta$; by Corollary~\ref{cor:local}, $v_P(\pi)=\alpha je/m+\beta e=1$, so $\pi$ is a uniformiser. Writing $q=p^ju'$ with $p\nmid u'$,
\[
\frac{D\pi}{\pi}=\alpha\frac{Dy}{y}+\beta\frac{Dp}{p}=\Bigl(\frac{\alpha j}{m}+\beta\Bigr)\frac{Dp}{p}+\frac{\alpha}{m}\frac{Du'}{u'}=\frac1e\frac{Dp}{p}+\frac\alpha m\frac{Du'}{u'}.
\]
The first term has valuation $-e$ and the second is $P$-integral, which proves (1). For (2), consider \eqref{eq:Da}: the terms $(Da_i)w_i$ are $P$-integral, and in $a_iw_i\,Dw_i/w_i=a_iw_i\sum_l\{il/m\}DQ_l/Q_l$ only $l=j$ contributes a pole, of order $e$, with coefficient $a_iw_i\{ij/m\}$, which either vanishes or has valuation at least $v_P(w_i)=(ij\bmod m)/g\ge1$ by Corollary~\ref{cor:local}. Hence $v_P(Da)\ge1-e$ for all $a\in\cO$. A unit $u$ of $\cO_P$ is $a/b$ with $a,b\in\cO$, $v_P(a)=v_P(b)=0$, and $Du=Da/b-aDb/b^2$ has valuation $\ge1-e$.

Finally (3): write $g=u\pi^k$ with $u$ a unit and $k=v_P(g)$; then $Dg/g=k\,D\pi/\pi+Du/u$, and (1), (2) give the claim, the term $k\,D\pi/\pi$ having valuation exactly $-\delta_P$ when $k\ne0$ because the characteristic is $0$.
\end{proof}

\begin{remark}\label{rem:special}
(a) The argument uses nothing about the residue field of $P$; in particular it is unchanged for $n>1$.

(b) If $p\mid Q$ but $p$ is special in $R$ ($p\mid Dp$), for instance $q=e^x$, then $Dp/p$ and hence $D\pi/\pi$ are $P$-integral: $D\cO_P\subseteq\cO_P$ and $D\fm_P\subseteq\fm_P$. Such $P$ behave like the special (hyperexponential) primes of the transcendental theory: the pole order of a derivative does not grow, and cancellation is possible. This is the case $n_i'=n_i$ in Davenport--Trager's condition~(i).

(c) Primes over factors of $\den_D(K)$ are not tame; as in \cite[\S10.3]{Bronstein05} they are absorbed into a guessed special denominator.

(d) The three regimes are unified by describing $\delta_P$ as the pole order of the logarithmic derivative at $P$: $\delta_P=0$ at special primes, $1$ at normal unramified primes, $e_P$ at normal branch primes.
\end{remark}

\section{Structure of elementary integrals}\label{sec:structure}

Let $f\in L^*$ have an elementary integral over $L$, and fix a Liouville decomposition \eqref{eq:liouville}. Extending constants to $\Fbar$ does not affect the normalisations (N1), (N2) nor the squarefree decomposition \eqref{eq:sqfree}, so $\Fbar L=\Fbar K(y)$ is again a simple radical extension with the same integral basis over $\Fbar\otimes_FR$; its height-one primes lie over those of $\cO$, and Lemma~\ref{lem:poleorder} applies verbatim. All valuations below are taken in $\Fbar L$; the $\cO$-denominator of an element of $L$ is unchanged by this extension.

\paragraph{Rearranging the logarithmic part.}
Let $\gamma_1,\dots,\gamma_s$ be a $\Q$-basis of $\sum_i\Q c_i$, write $c_i=\sum_kr_{ik}\gamma_k$ with $r_{ik}\in\Q$, and let $N\in\Z_{>0}$ clear all denominators. Then
\begin{equation}\label{eq:rearr}
\sum_ic_i\frac{Du_i}{u_i}=\frac1N\sum_{k=1}^s\gamma_k\frac{DU_k}{U_k},\qquad U_k:=\prod_iu_i^{\,Nr_{ik}}\in(\Fbar L)^*,
\end{equation}
with $\gamma_1,\dots,\gamma_s$ linearly independent over $\Q$. This replaces the step in the transcendental theory where the $u_i$ are taken to be pairwise coprime irreducibles, which is unavailable in $\cO$.

\begin{theorem}[denominator]\label{thm:denominator}
Let $P$ be tame. Then
\[
v_P(v)=v_P(f)+\delta_P\quad\text{if } v_P(f)<-\delta_P,\qquad v_P(v)\ge0\quad\text{if } v_P(f)\ge-\delta_P.
\]
Moreover, if $v_P(f)>-\delta_P$ then $v_P(U_k)=0$ for all $k$.
\end{theorem}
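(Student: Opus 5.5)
We write the Liouville decomposition after the rearrangement \eqref{eq:rearr} as
\[
f=Dv+\frac1N\sum_{k=1}^s\gamma_k\frac{DU_k}{U_k},
\]
and compare $P$-adic valuations of the three kinds of terms using Lemma~\ref{lem:poleorder}.

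\emph{The logarithmic part.} By Lemma~\ref{lem:poleorder}(3), each term satisfies
\[
\frac{DU_k}{U_k}=v_P(U_k)\,\frac{D\pi}{\pi}+r_k,\qquad v_P(r_k)\ge1-\delta_P .
\]
Hence the whole logarithmic part $\Lambda:=\frac1N\sum_k\gamma_k DU_k/U_k$ has valuation at least $-\delta_P$. More precisely,
\[
\Lambda=c\,\frac{D\pi}{\pi}+r,\qquad c:=\frac1N\sum_k\gamma_k v_P(U_k)\in\Fbar,\qquad v_P(r)\ge1-\delta_P .
\]
Since the $\gamma_k$ are linearly independent over $\Q$ and the $v_P(U_k)$ are integers, $c=0$ if and only if $v_P(U_k)=0$ for every $k$.

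\emph{The rational part.} Suppose $v_P(v)<0$. Then Lemma~\ref{lem:poleorder}(3) gives $v_P(Dv)=v_P(v)-\delta_P<-\delta_P$. On the other hand, $v_P(\Lambda)\ge-\delta_P$, so the strict inequality forces
\[
v_P(f)=v_P(Dv)=v_P(v)-\delta_P<-\delta_P .
\]
Conversely, suppose $v_P(v)\ge0$. Then $v_P(Dv)\ge -\delta_P$ by the inequality in (3) when $v_P(v)=0$, and by the equality case when $v_P(v)>0$, where in fact $v_P(Dv)\ge 1-\delta_P$. Therefore $v_P(f)\ge-\delta_P$. These two implications give the dichotomy in the theorem.
\begin{itemize}
\item If $v_P(f)<-\delta_P$, then $v_P(v)\ge0$ is impossible, so $v_P(v)<0$ and $v_P(v)=v_P(f)+\delta_P$.
\item If $v_P(f)\ge-\delta_P$, then $v_P(v)<0$ is impossible, so $v_P(v)\ge0$.
\end{itemize}
The one point needing care is that a pole of $v$ cannot be cancelled by $\Lambda$. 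This is exactly what the strict inequality $v_P(Dv)<-\delta_P\le v_P(\Lambda)$ rules out.

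\emph{The residue statement.} Assume $v_P(f)>-\delta_P$. By the previous step $v_P(v)\ge0$, so $Dv=D(\text{unit})+D(\text{positive valuation})$ has valuation at least $1-\delta_P$. The positive-valuation case is the equality case of (3), and the unit case is Lemma~\ref{lem:poleorder}(2).

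This is the main obstacle: the case $v_P(v)=0$ needs the finer statement (2), since (3) alone only gives $\ge-\delta_P$. Write $v=v_0+v_1$ with $v_0\in F'$ constant and $v_P(v_1)>0$, or more simply $v=u$ a unit or $0$. For a unit $u$ we have $Du=u\cdot Du/u$, and by (3) with $k=0$, $Du/u=r$ has $v_P\ge1-\delta_P$. So in all cases $v_P(Dv)\ge1-\delta_P$.

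Then
\[
c\,\frac{D\pi}{\pi}=f-Dv-r
\]
has valuation at least $1-\delta_P$. But $v_P(D\pi/\pi)=-\delta_P$ by Lemma~\ref{lem:poleorder}(1), so $c=0$. By the $\Q$-linear independence of the $\gamma_k$, this gives $v_P(U_k)=0$ for all $k$.

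Everything is done in $\Fbar L$, where the lemma applies verbatim as noted at the start of this section.
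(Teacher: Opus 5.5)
Your proof is correct and follows the paper's argument. The dichotomy comes from the strict inequality $v_P(Dv)<-\delta_P\le v_P(\Lambda)$. For $v_P(f)>-\delta_P$, you isolate the $D\pi/\pi$-component $c=\frac1N\sum_k\gamma_kv_P(U_k)$ of the logarithmic part, exactly as the paper isolates the coefficient $\lambda_P\sum_k\gamma_kv_P(U_k)$ of $\pi^{-\delta_P}$. Your decomposition $\Lambda=c\,D\pi/\pi+r$ simply avoids talking about Laurent coefficients.

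The aside ``$v=v_0+v_1$ with $v_0$ constant'' is not available in general: for $n>1$ the residue field of $P$ is not $\Fbar$. You never use it, though, since the unit/positive-valuation case split together with Lemma~\ref{lem:poleorder}(2) suffices.
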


\begin{proof}
Let $\Sigma=\sum_k\gamma_kDU_k/U_k$ and $\lambda_P\neq0$ the coefficient of $\pi^{-\delta_P}$ in $D\pi/\pi$. By Lemma~\ref{lem:poleorder}(3), $v_P(\Sigma)\ge-\delta_P$ and the coefficient of $\pi^{-\delta_P}$ in $\Sigma$ is $\lambda_P\sum_k\gamma_kv_P(U_k)$, which vanishes iff all $v_P(U_k)=0$, by $\Q$-linear independence of the $\gamma_k$.

If $v_P(v)=-k<0$, then $v_P(Dv)=-k-\delta_P<-\delta_P\le v_P(\Sigma)$ by Lemma~\ref{lem:poleorder}(3), so $v_P(f)=v_P(Dv)=-k-\delta_P$. If $v_P(v)\ge0$, then $v_P(Dv)\ge-\delta_P$ by Lemma~\ref{lem:poleorder}(1)--(3), so $v_P(f)\ge-\delta_P$. Together these give the first assertion. For the second, if $v_P(f)>-\delta_P$ then $v_P(v)\ge0$, so $Dv$ and $f$ both lack a $\pi^{-\delta_P}$ term (Lemma~\ref{lem:poleorder}(2),(3)); hence so does $\Sigma$, and all $v_P(U_k)=0$.
\end{proof}

\begin{corollary}[shape of the rational part]\label{cor:shape}
Let $d=\den(f)$, let $d=d_sd_n$ be its splitting factorisation in $R$, and let $d_n=\prod_jd_j^{\,j}$ be the squarefree factorisation of $d_n$. Then
\[
v=\frac{b}{s\prod_jd_j^{\,j-1}},\qquad b\in\cO,\ s\in R,
\]
where every irreducible factor of $s$ is special in $R$.
\end{corollary}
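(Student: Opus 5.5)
The plan is to write $v$ in the canonical form \eqref{eq:rep}, $v=\sum_i b_iw_i/\den(v)$, and to bound $v_p(\den(v))$ one irreducible $p\in R$ at a time. We then take $s$ to be the part of $\den(v)$ supported on primes that are not normal. So everything reduces to showing that, for every normal irreducible $p$, the $p$-adic valuation of $\den(v)$ is at most $j-1$ when $p\mid d_j$, and is $0$ when $p\nmid d_n$. Since $\cO$ is free with basis $(w_i)$ and $\cO_{(p)}=\bigcap_{P\mid p}\cO_P$ (the Krull-domain remark after \eqref{eq:rep}), it suffices to show that $v_P(v)\ge -(j-1)e_P$ for all $P\mid p$, where we set $j=v_p(d_n)$ (possibly $0$). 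Indeed, then $p^{j-1}v\in\cO_{(p)}$ if $j\ge1$, and $v\in\cO_{(p)}$ if $j=0$.

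Fix a normal $p$ and a prime $P\mid p$. This $P$ is tame, so Theorem~\ref{thm:denominator} applies. Put $j=v_p(d)$. By the same remark after \eqref{eq:rep}, $v_P(f)\ge -je_P$. There are two cases.

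\emph{First case: $v_P(f)\ge-\delta_P$.} Then Theorem~\ref{thm:denominator} gives $v_P(v)\ge0$, which suffices.

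\emph{Second case: $v_P(f)<-\delta_P$.} Then $v_P(v)=v_P(f)+\delta_P\ge -je_P+\delta_P$, and we need this to be at least $-(j-1)e_P$, i.e.\ we need $\delta_P\ge e_P$.
\begin{itemize}
\item In the branch case $\delta_P=e_P$, and the bound holds exactly.
\item In the unramified case $\delta_P=1=e_P$ by Corollary~\ref{cor:local}, so again the bound holds.
\end{itemize}
In either case, if $j=0$ then $v_P(f)\ge0>-\delta_P$, so we are in the first case and $v$ is $P$-integral. This also settles normal primes not dividing $d$.

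The only subtle point, which I expect to be the main obstacle, is the bookkeeping between $R$-denominators and $P$-valuations. We must check that the valuation bound $v_P(v)\ge-(j-1)e_P$ at \emph{every} $P\mid p$ really forces $p^{j-1}v\in\cO_{(p)}$. This holds because $v_P(p)=e_P$ and $\cO_{(p)}=\bigcap_{P\mid p}\cO_P$.

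A second point is that all valuations are taken in $\Fbar L$, while the denominators live in $R$. Here we use that the $\cO$-denominator is unchanged under constant extension, and that each normal $p$ stays squarefree and normal in $\Fbar\otimes R$. The latter holds because $\gcd(p,\Dbar_Rp)=1$ is preserved, so its irreducible factors over $\Fbar$ are normal and tame.

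Finally, we collect the factors $p^{\max(j-1,0)}$ over normal $p$ into $\prod_jd_j^{\,j-1}$, and we put all non-normal (special) primes into $s$. Special primes include those dividing $\den_D(K)$, per Remark~\ref{rem:special}(c). We give no bound on $s$, as the statement requires none. This yields $v=b/(s\prod_jd_j^{\,j-1})$ with $b\in\cO$.
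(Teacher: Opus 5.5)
Your proof is correct and follows the paper's argument. At each tame $P$ over a normal $p$ with $p^k\,\|\,d$, Theorem~\ref{thm:denominator} together with $\delta_P=e_P$ gives $v_P(v)\ge\min\bigl(0,-(k-1)e_P\bigr)$, the intersection $\cO_{(p)}=\bigcap_{P\mid p}\cO_P$ turns this into the bound on $\den(v)$, and special primes are left unconstrained in $s$. Your separate handling of $j=0$ and of the passage to $\Fbar$ is in fact a little more careful than the paper's one-line version, whose inequality $\min(0,-ke_P+\delta_P)\ge-(k-1)e_P$ tacitly assumes $k\ge1$.
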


\begin{proof}
Every height-one prime of $\cO$ lies over a normal $p$ (then it is tame) or over a special $p$. For tame $P$ over $p$ with $p^k\|d$, we have $v_P(f)\ge-ke_P$, so Theorem~\ref{thm:denominator} gives $v_P(v)\ge\min(0,-ke_P+\delta_P)\ge-(k-1)e_P$, i.e.\ $v\in p^{-(k-1)}\cO_P$. Denominators at primes over special $p$ are unconstrained.
\end{proof}

Note that a branch prime $p\mid Q$ is normal in $R$ as soon as $p\nmid\den_D(K)$ and $p\nmid Dp$, so \textsc{SplitFactor} places it in $d_n$ and the corollary bounds it: the pole of the radical itself at $p$ has been absorbed by the integral basis. Corollary~\ref{cor:shape} is the exact analogue of \cite[Thm.~10.2.1]{Bronstein05}, and of \cite[Cor.~1]{DavenportTrager85} when there are no specials; the branch primes are ``simple with shift $e_P$'' in the sense of \cite[\S10.4]{Bronstein05}.

\begin{theorem}[logands are $S$-units]\label{thm:sunits}
Let
\[
S_f:=\{P\ \text{tame}: v_P(f)\le-\delta_P\}\ \cup\ \{P\ \text{not tame}\}.
\]
Then each $U_k$ in \eqref{eq:rearr} is an $S_f$-unit of $\Fbar\otimes\cO$: $v_P(U_k)=0$ for every height-one $P\notin S_f$. Consequently the logands in \eqref{eq:liouville} may be taken in $(\Fbar\otimes\cO)_{S_f}^\times$, and for tame $P$ with $v_P(f)\ge-\delta_P$ the integers $v_P(U_k)$ are determined by $f$ through the coefficient of $\pi^{-\delta_P}$ in $f$; for tame $P$ with $v_P(f)<-\delta_P$ they are determined by the same coefficient after subtracting a local derivative that reduces the pole order of $f$ at $P$ to at most $\delta_P$ (such a reduction exists at every tame prime, and the resulting coefficient does not depend on it; see \cite[Lemma~6.4]{PartII} --- for $n=1$ it is the ordinary residue of $f\,dx$, Section~\ref{sec:curve}).
\end{theorem}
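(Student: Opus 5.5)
The $S_f$-unit assertion comes almost directly from the second half of Theorem~\ref{thm:denominator}. A height-one prime $P$ of $\Fbar\otimes\cO$ outside $S_f$ is tame and satisfies $v_P(f)>-\delta_P$, so that theorem gives $v_P(U_k)=0$ for every $k$. Since $\Fbar\otimes\cO$ is a Krull domain, intersecting the local rings over all height-one primes $P\notin S_f$ shows that $U_k$ and $U_k^{-1}$ lie in the localisation $(\Fbar\otimes\cO)_{S_f}$. Hence each $U_k$ is an $S_f$-unit.

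To replace the original logands by these, I would substitute \eqref{eq:rearr} into \eqref{eq:liouville}. This gives $f=Dv+\sum_k(\gamma_k/N)\,DU_k/U_k$, which is again of the form \eqref{eq:liouville}, with constants $\gamma_k/N\in\Fbar$ and logands $U_k$ in $(\Fbar\otimes\cO)_{S_f}^\times$. The rearrangement only uses the field $(\Fbar L)^*$, since the exponents $Nr_{ik}$ are integers. So nothing about factorisation in $\cO$ is needed.

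For the determination of the $v_P(U_k)$, I would compare coefficients of $\pi^{-\delta_P}$ at a tame $P$, writing $\lambda_P$ for the leading coefficient of $D\pi/\pi$.
\begin{enumerate}
\item Suppose first $v_P(f)\ge-\delta_P$. Then Theorem~\ref{thm:denominator} gives $v_P(v)\ge0$. By Lemma~\ref{lem:poleorder}(2),(3) the derivative $Dv$ has no $\pi^{-\delta_P}$ term: for $v=u\pi^k$ with $k\ge0$, the expression $Dv=v(k\,D\pi/\pi+Du/u)$ has valuation $\ge k-\delta_P$, and the term $k\,D\pi/\pi$ has valuation exactly $-\delta_P$ only when $k=0$, where it vanishes. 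The coefficient of $\pi^{-\delta_P}$ in $f$ therefore equals $(\lambda_P/N)\sum_k\gamma_kv_P(U_k)$.
\item The $\gamma_k$ are $\Q$-independent, and the coefficient lives in the residue field, which contains the $\gamma_k$ because they are constants of $\Fbar$. So this single element determines the integer vector $(v_P(U_k))_k$, given the fixed $\gamma$-basis.
\item If $v_P(f)<-\delta_P$, I would subtract a local derivative $Dh$, with $h$ in the completion at $P$, that lowers the pole to order at most $\delta_P$. Then $f-Dh=D(v-h)+\Sigma/N$, and the same coefficient comparison applies to $v-h$ provided $v_P(v-h)\ge0$. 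That follows by rerunning the first part of Theorem~\ref{thm:denominator} locally.
\end{enumerate}

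The main obstacle is this last case. One must show that the reduction exists and that the resulting $\pi^{-\delta_P}$-coefficient is independent of the choice of $h$. Independence follows because two choices differ by some $h'$ with $Dh'$ of valuation $\ge-\delta_P$, which forces $v_P(h')\ge0$, so $Dh'$ has no $\pi^{-\delta_P}$ term. Existence needs solving $Dh\equiv f$ modulo lower poles term by term. This uses that the leading coefficient of $D(\pi^{-k})$, namely $-k\lambda_P\pi^{-k-\delta_P}$, is nonzero, so each pole order $-k-\delta_P$ with $k\ge1$ can be killed in turn. The paper cites \cite[Lemma~6.4]{PartII} for exactly this, and I would defer to it. For $n=1$ I would identify the coefficient with the ordinary residue of $f\,dx$, as done in Section~\ref{sec:curve}.
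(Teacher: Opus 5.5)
Your proposal is correct and follows the paper's proof: the $S_f$-unit claim is just the second assertion of Theorem~\ref{thm:denominator}, and the $v_P(U_k)$ are read off from the $\pi^{-\delta_P}$-coefficient once $v_P(v)\ge0$ gives $v_P(Dv)\ge1-\delta_P$. Your short existence-and-invariance argument for the local reduction (two choices differ by some $h'$ with $v_P(Dh')\ge-\delta_P$, which forces $v_P(h')\ge0$) supplies what the paper delegates to \cite[Lemma~6.4]{PartII}, and your coefficient $(\lambda_P/N)\sum_k\gamma_kv_P(U_k)$ is the right one, since the paper's proof silently drops the factor $1/N$.
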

\begin{proof}
This is the second assertion of Theorem~\ref{thm:denominator}. For the last claim, if $v_P(f)\ge-\delta_P$ then $v_P(v)\ge0$ by Theorem~\ref{thm:denominator}, so $v_P(Dv)\ge1-\delta_P$ by Lemma~\ref{lem:poleorder} and the $\pi^{-\delta_P}$ coefficient of $f$ is that of $\Sigma$, namely $\lambda_P\sum_k\gamma_kv_P(U_k)$. When $v_P(f)<-\delta_P$ the coefficient of $\pi^{-\delta_P}$ in $f$ alone is not meaningful (it receives contributions from $Dv$); subtracting a local derivative as in the statement removes them, and the invariance of the resulting coefficient is proved in \cite[Lemma~6.4]{PartII}.
\end{proof}

Theorem~\ref{thm:sunits} replaces \cite[Lemma~3, Thm.~3]{DavenportTrager85} and the corresponding part of \cite[Thm.~10.2.1]{Bronstein05}. When $\cO$ is a UFD with $\cO^\times=F^*$ --- the transcendental case --- $S$-units are generated by the primes in $S$, and one recovers the classical statement that the logands are the irreducible factors of $d_n$ and the specials. In general three new phenomena occur.

\begin{enumerate}
\item $\cO^\times$ can exceed $F^*$: $x+\sqrt{x^2+1}$ has norm $1$. For $f=\sqrt{x^2+1}$ the set $S_f$ is empty (the branch primes have $v_P(f)=1>-2$), and indeed $\int\sqrt{x^2+1}\,dx=\tfrac x2\sqrt{x^2+1}+\tfrac12\log\bigl(x+\sqrt{x^2+1}\bigr)$.
\item Primes of $\cO$ over a prime $p$ of $R$ may need to be separated: $\int\frac{dx}{(x-1)\sqrt x}=\log\frac{\sqrt x-1}{\sqrt x+1}$, whose logand has divisor $P_+-P_-$ for the two primes $P_\pm$ over $x-1$. The irreducible factor $x-1$ itself, with divisor $P_++P_-$, does not suffice.
\item The element $y$ is never needed as a logand, since $\log y=\frac1m\log q$.
\end{enumerate}

\begin{remark}[on Risch--Norman condition (iii)]
$(\Fbar\otimes\cO)_S^\times/\Fbar^*$ is finitely generated for every finite $S$, since $\cO$ is a finitely generated domain over a field \cite{Samuel66}. So a finite generating set of logands exists, but it depends on $S_f$, that is on the support of $\den(f)$, and not only on $L$. Davenport--Trager's formulation of condition~(iii) --- a finite set depending only on the field --- cannot be retained for radical extensions; the correct formulation is that the set depends on the support of the denominator of the integrand.
\end{remark}

\section{The case $n=1$: residues, $S$-units and the Jacobian}\label{sec:curve}

In this section $K=\Fbar(x)$, $Dx=1$, so $R=\Fbar[x]$, $\den_D(K)=1$, and every irreducible is normal; thus every height-one prime of $\cO$ is tame, $S_f=\{P: v_P(f)\le-\delta_P\}$, and there is no special part $s$ in Corollary~\ref{cor:shape}. Let $X$ be the smooth projective curve over $\Fbar$ with function field $L$, and $S_\infty$ its set of places over $x=\infty$; by the argument of Proposition~\ref{prop:basis} applied at $1/x$, $|S_\infty|=\gcd(m,\deg q)$, each with ramification index $m/\gcd(m,\deg q)$. The affine places of $X$ are the height-one primes of $\cO$. By Riemann--Hurwitz,
\[
2g(X)-2=-2m+\sum_j\deg Q_j\,\bigl(m-\gcd(j,m)\bigr)+m-\gcd(m,\deg q).
\]

\subsection{Residues}
For a place $P$ with uniformiser $\pi$, $dx=\frac{d\pi}{D\pi}$, so
\[
\Res_P(f\,dx)=\text{constant term of }\frac{f}{D\pi/\pi}.
\]
With $f=\rho\lambda_P\pi^{-\delta_P}+\dots$ and $D\pi/\pi=\lambda_P\pi^{-\delta_P}+\dots$ this constant term is $\rho$; so the ``coefficient of $\pi^{-\delta_P}$ relative to $D\pi/\pi$'' of Theorem~\ref{thm:sunits} is the ordinary residue of the differential $f\,dx$, and since $\Res_P(dv)=0$ and $\Res_P(du/u)=v_P(u)$ we recover the classical statement
\begin{equation}\label{eq:residues}
\Res_P(f\,dx)=\sum_ic_i\,v_P(u_i)=\frac1N\sum_k\gamma_kv_P(U_k)\qquad\text{for every affine place }P .
\end{equation}
For $n>1$ there is no global residue theorem, and the leading-coefficient formulation of Theorem~\ref{thm:sunits} is the substitute.

\subsection{$S$-units and the Jacobian}
For a finite set $S$ of affine places put $T=S\cup S_\infty$ and let $\Div^0_T\cong\Z^{|T|-1}$ be the degree-zero divisors supported on $T$. The map $u\mapsto\dv(u)$ gives an exact sequence
\begin{equation}\label{eq:exact}
1\to\Fbar^*\to\cO_S^\times\to\Div^0_T\to\Jac(X)(\Fbar),
\end{equation}
so
\[
\rk\bigl(\cO_S^\times/\Fbar^*\bigr)=|T|-1-\rk J_T,\qquad J_T:=\langle[P-P']:P,P'\in T\rangle\subseteq\Jac(X)(\Fbar).
\]

\begin{theorem}\label{thm:curve}
Let $f\in L^*$ with an elementary integral, let $S=S_f$, and for each distinct nonzero value $c$ among the residues $\Res_P(f\,dx)$, $P\in S$, let
\[
\mathcal D_c:=\sum_{P\in S,\ \Res_P(f\,dx)=c}P\ \in\Div(X),
\]
completed to degree $0$ by a divisor supported on $S_\infty$ in the unique way compatible with \eqref{eq:residues}. Then $f$ has an elementary integral over $L$ if and only if
\begin{enumerate}
\item[(i)] each class $[\mathcal D_c]$ is torsion in $\Jac(X)$, say $N_c[\mathcal D_c]=0$ and $\dv(u_c)=N_c\mathcal D_c$ with $u_c\in L^*$; and
\item[(ii)] the residue-free remainder $f-\sum_c\frac{c}{N_c}\,\frac{Du_c}{u_c}$ is a derivative in $L$,
\end{enumerate}
in which case $\int f=v+\sum_c\frac{c}{N_c}\log u_c$ with $Dv$ the remainder in (ii), and $v$ has the shape of Corollary~\ref{cor:shape}. Condition (ii) is equivalent to the solvability of the linear system of Corollaries~\ref{cor:shape} and~\ref{cor:degbound} for $v$, and it is \emph{not} implied by (i): the differential $f\,dx-\sum_c\frac{c}{N_c}\frac{du_c}{u_c}$ has zero residues everywhere but may represent a nonzero class in $H^1_{\mathrm{dR}}(X)$.
\end{theorem}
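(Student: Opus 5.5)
The plan is to prove the two directions separately, using the Liouville decomposition \eqref{eq:liouville}, its rearrangement \eqref{eq:rearr}, and the residue identity \eqref{eq:residues}. Sufficiency is essentially formal, so the work lies in necessity.

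\emph{Sufficiency.} Assume (i) and (ii). Then $f=Dv+\sum_c\frac{c}{N_c}\frac{Du_c}{u_c}$ with $v\in L$, which is an elementary integral of the stated form. For the shape of $v$, apply Corollary~\ref{cor:shape} to the identity $f=Dv+\Sigma$. The proof of Theorem~\ref{thm:denominator} only uses that the logarithmic part has pole order at most $\delta_P$ at each tame place, which holds by Lemma~\ref{lem:poleorder}(3). The claimed equivalence of (ii) with solvability of the linear system is then a restatement: Corollaries~\ref{cor:shape} and~\ref{cor:degbound} say that any $v$ with $Dv$ equal to the remainder lies in a finite-dimensional space cut out by that system.

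For the final remark, the residue-free differential $\omega=f\,dx-\sum_c\frac{c}{N_c}\frac{du_c}{u_c}$ is a differential of the second kind. Its class in $H^1_{\mathrm{dR}}(X)$ vanishes exactly when $\omega=dv$, so (i) alone does not imply (ii). An explicit example would be $\omega$ a nonzero holomorphic differential in genus $\ge1$.

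\emph{Necessity.} Suppose $\int f$ is elementary, and take the rearranged form \eqref{eq:rearr} with $\gamma_1,\dots,\gamma_s$ independent over $\Q$.
\begin{enumerate}
\item By Theorem~\ref{thm:sunits}, each $U_k$ is an $S$-unit, so $\dv(U_k)$ is supported on $T=S\cup S_\infty$.
\item By \eqref{eq:residues}, at $P\in S$ we have $\Res_P(f\,dx)=\frac1N\sum_k\gamma_kv_P(U_k)$. Writing each residue value as $c=\sum_k r_k(c)\gamma_k$, independence forces $v_P(U_k)=Nr_k(c)$ for every $P$ with residue $c$, and $v_P(U_k)=0$ where the residue vanishes.
\item Hence the affine part of $\dv(U_k)$ equals $N\sum_c r_k(c)\,\mathcal D_c$.
\item Completing at $S_\infty$ by the same residue bookkeeping (the residue theorem fixes the infinite part compatibly), we get $\dv(U_k)=N\sum_c r_k(c)\,\mathcal D_c$ as degree-zero divisors. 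This gives $\sum_c r_k(c)[\mathcal D_c]$ torsion for each $k$.
\end{enumerate}

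The step I expect to be the main obstacle is passing from these combined relations to torsion of each individual $[\mathcal D_c]$. I would first try to invert the rational matrix $(r_k(c))_{k,c}$. This works directly when the distinct nonzero residue values are $\Q$-linearly independent: then one may take the $\gamma$'s to be the $c$'s themselves, the matrix is the identity, and $N[\mathcal D_c]=0$ follows immediately. In general the matrix need not have full column rank, for instance when two residue values are $c$ and $2c$. In that case I would try to split the $U_k$ by a choice of $\Q$-basis adapted to each value $c$. I expect this to need an additional hypothesis or a finer argument, because a single logand with divisor $\mathcal D_c+2\mathcal D_{2c}$ only forces torsion of that combined class.

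Once (i) is established, (ii) follows. The function $f-\sum_c\frac{c}{N_c}\frac{Du_c}{u_c}$ has zero residue at every place, by \eqref{eq:residues} applied to both expressions. It also differs from $Dv$ by $\frac1N\sum_k\gamma_k\frac{DU_k}{U_k}-\sum_c\frac{c}{N_c}\frac{Du_c}{u_c}$, which is a logarithmic derivative combination whose divisors cancel. That combination is therefore a constant-coefficient sum of $Dw/w$ with $w$ constant, hence zero, so the remainder equals $Dv$.
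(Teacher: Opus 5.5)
Your route is the paper's own: Liouville, the rearrangement \eqref{eq:rearr}, the residue identity \eqref{eq:residues} and the exact sequence \eqref{eq:exact}. Sufficiency, the shape of $v$, and the passage from (i) to (ii) by cancelling logarithmic derivatives of constants all match the paper. Your holomorphic-differential example for the last sentence, e.g.\ $dx/y$ on a genus-one curve, is also fine. The step you single out is a genuine gap, but no finer argument will close it, because the claim is false as stated. The paper's proof passes over it with the unjustified assertion that ``each $\mathcal D_c$ has a principal multiple''.

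Here is a counterexample. Take $X\colon y^2=x^3-2$ and $P_2=(3,5)$, which has infinite order: by Nagell--Lutz, since $2P_2=(\tfrac{129}{100},-\tfrac{383}{1000})$ is not integral. The tangent line $u=10y-27x+31$ at $P_2$ has $\dv(u)=2P_2+P_1-3\infty$ with $P_1=-2P_2=(\tfrac{129}{100},\tfrac{383}{1000})$. For $f=Du/u$ the integral $\log u$ is elementary, and the residues are $1$ at $P_1$ and $2$ at $P_2$. The completions are forced since $|S_\infty|=1$, so $\mathcal D_1=P_1-\infty$ and $\mathcal D_2=P_2-\infty$, both of infinite order in $\Jac(X)$. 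Hence (i) fails, and exactly as you predicted only $\mathcal D_1+2\mathcal D_2$ is torsion (here principal).

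The correct statement indexes by a $\Q$-basis, as in Trager. Choose the $\gamma_k$ in \eqref{eq:rearr} to extend a $\Q$-basis of the span of \emph{all} residues of $f\,dx$, write $\Res_P(f\,dx)=\sum_k r_{Pk}\gamma_k$, and set $\mathcal D_k:=\sum_{P\in X}r_{Pk}P$. Your steps 2--3, applied at every place, then give $\dv(U_k)=N\mathcal D_k$ directly; the $U_k$ beyond the residue basis have trivial divisor and are constants. So no matrix inversion is needed, and your final paragraph yields (ii).

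Your step 4 needs the same repair. The residue theorem gives only one linear relation, which does not determine the part of $\dv(U_k)$ on $S_\infty$ when $|S_\infty|=\gcd(m,\deg q)\ge 2$. What does determine it is \eqref{eq:residues} at the places at infinity. It holds there too, since $\Res_P(dv)=0$ and $\Res_P(du/u)=v_P(u)$ at every place of $X$. But then residue values occurring only on $S_\infty$ must be admitted. Without them the ``only if'' fails already for $f=1/\sqrt{x^2+1}$. Here $S_f=\emptyset$, so there is no $\mathcal D_c$, and (ii) would demand that $f$ be a derivative in $L$. Yet $f\,dx$ has residues $\mp1$ at $\infty_\pm$, and $\int f=\log\bigl(x+\sqrt{x^2+1}\bigr)$.
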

\begin{proof}
If $f$ has an elementary integral, the strong Liouville theorem together with \eqref{eq:residues} and \eqref{eq:exact} gives (i): the residues determine the divisor of each $U_k$ at every place (on $S$ directly, at $S_\infty$ by the completion clause), so $\dv(U_k)$ is a $\Z$-combination of the $N_c\mathcal D_c$ and each $\mathcal D_c$ has a principal multiple. The rearranged logarithmic part then agrees with $\sum_c\frac{c}{N_c}Du_c/u_c$ up to logarithmic derivatives of constants, so the remainder is $Dv$, giving (ii). The converse is immediate. This sharpens the usual compression of Risch's and Trager's criterion \cite{Risch70,Trager84}, in which (ii) is implicit in the reduction preceding the residue step; stated for the original integrand, (i) alone is insufficient (Remark~\ref{rem:torsionnotenough}).
\end{proof}

\begin{remark}\label{rem:torsionnotenough}
On $y^2=x^3+1$ take $f=1/\bigl((x-2)y\bigr)$: the residues are $\pm\frac13$ at $(2,\pm3)$, and $\mathcal D_{1/3}=(2,3)-\infty$ is torsion of order $6$, since $(2,3)$ generates $E(\Q)\cong\Z/6\Z$. Yet, with $a_\pm=(x+1)y\pm\bigl(\tfrac{x^3}6+2x^2-x+\tfrac53\bigr)$, $\dv(a_\pm)=6\,(2,\pm3)-6\infty$, one has
\[
\int\frac{dx}{(x-2)\sqrt{x^3+1}}=\frac1{18}\log\frac{a_-}{a_+}-\frac13\int\frac{dx}{\sqrt{x^3+1}},
\]
so condition (ii) fails and the integral is not elementary; the linear system of Theorem~\ref{thm:complete} correctly has no solution. The obstruction --- a holomorphic discrepancy between the chosen logarithmic derivative and any differential with the given residues --- is studied systematically in the companion paper \cite{PartII}, where such integrals are expressed through abelian integrals of $X$.
\end{remark}

The parallel method may therefore be described as Trager's algorithm with Hermite reduction replaced by undetermined coefficients for $b\in\cO$, and with the logand step performed either exactly (residues, torsion test, Riemann--Roch) or heuristically (a precomputed list of $S$-units). Two special cases deserve mention.

\begin{proposition}[genus 0]\label{prop:genus0}
If $g(X)=0$, every $\mathcal D_c$ is principal, so $N_c=1$, and $u_c$ is obtained by linear algebra: $u_c$ spans the one-dimensional Riemann--Roch space $\mathcal L(-\mathcal D_c)$, computed as $\{\sum_i\mu_iw_i/\prod_{p\in S}p^{k_p}: v_P(\cdot)\ge -(\mathcal D_c)_P\ \text{for }P\in T\}$ with the valuation conditions expressed through Puiseux expansions at $T$ \cite{Hess02}. In this case the logarithmic part of the integral is completely determined by the residues, and Theorem~\ref{thm:curve} together with the (heuristic) degree bound of Section~\ref{sec:algorithm} is the entire algorithm. For $m=2$, $g=0$ iff $\deg q\le2$, and the Euler substitution $z=x+y$ makes $L=\Fbar(z)$; the radical then disappears altogether.
\end{proposition}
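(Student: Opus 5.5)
Since $g(X)=0$, the Abel--Jacobi map is trivial: $\Jac(X)$ is a point. So every degree-zero divisor is principal, in particular each $\mathcal D_c$, and we may take $N_c=1$ in Theorem~\ref{thm:curve}(i). The torsion test is therefore vacuous, and condition (i) holds automatically. The logarithmic part is fixed by the residues up to the choice of $u_c$ with $\dv(u_c)=\mathcal D_c$. That $u_c$ is unique up to a constant, since two such functions differ by a unit with empty divisor on a projective curve, i.e.\ a constant. Hence the statement that the logarithmic part is ``completely determined by the residues'' follows directly from Theorem~\ref{thm:curve}.

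To compute $u_c$, I would use Riemann--Roch. Since $\deg(-\mathcal D_c)=0$ and $g=0$, we get $\ell(-\mathcal D_c)=0+1-0=1$, so $\mathcal L(-\mathcal D_c)$ is one-dimensional and any nonzero element $u$ has $\dv(u)\ge\mathcal D_c$ with equality by degree. The paper's $\mathcal L(-\mathcal D_c)$ must therefore mean $\{u:\dv(u)\ge\mathcal D_c\}$, matching the displayed condition $v_P\ge-(\mathcal D_c)_P$ only up to a sign convention; I would fix the convention so the space is spanned by $u_c$.

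The concrete description then needs an ambient space. Negative parts of $\mathcal D_c$ at affine places are supported on $S$. Using the integral basis of Proposition~\ref{prop:basis} and the local description in Corollary~\ref{cor:local}, any function whose only affine poles lie over primes $p\in S$, of bounded order, has the form $\sum_i\mu_iw_i/\prod_p p^{k_p}$ with $\mu_i\in\Fbar[x]$ for suitable $k_p$. Bounding the pole orders at $S_\infty$ bounds $\deg\mu_i$. So the space is finite dimensional, and the valuation conditions at $T$ are linear in the coefficients, computable from Puiseux expansions. This is the step needing the most care: choosing $k_p$ and the degree bounds so that no solution is missed. I would take $k_p=\lceil\max_{P\mid p}(-(\mathcal D_c)_P)^+/e_P\rceil$ and read the degree bound off $v_P(w_i)$ at infinity.

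Condition (ii) is then a linear system for $b$ by Corollary~\ref{cor:shape}, together with the degree bound, giving the whole algorithm. For $m=2$, Riemann--Hurwitz gives $2g-2=-4+\deg Q+(2-\gcd(2,\deg q))$, with $Q=Q_1$ by (N1). This yields $g=0$ exactly when $\deg q\le2$. In that case set $z=x+y$. Then $z(z-2x)=y^2-x^2=q-x^2$ is of degree at most $1$ in $x$, so $x\in\Fbar(z)$, hence $y=z-x\in\Fbar(z)$ and $L=\Fbar(z)$. If $q$ has leading coefficient $a\ne1$, we first rescale to $z=\sqrt a\,x+y$, which is allowed since $\Fbar$ is algebraically closed; if $\deg q=1$, we can take $z=y$ directly.
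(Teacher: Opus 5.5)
Your argument is correct and is exactly the reasoning the paper leaves implicit, since it gives no separate proof: the Jacobian is trivial so $N_c=1$, Riemann--Roch gives a one-dimensional space with $u_c$ unique up to constants, the ansatz is made rigorous by the integral basis together with the no-cancellation Lemma~\ref{lem:vandermonde} at infinity, and Riemann--Hurwitz settles $m=2$. Your two adjustments are also right and actually needed: the displayed condition $v_P(\cdot)\ge-(\mathcal D_c)_P$ describes $\mathcal L(\mathcal D_c)=\Fbar\,u_c^{-1}$ rather than $\mathcal L(-\mathcal D_c)$, and the literal substitution $z=x+y$ generates $L$ only for monic $q$ of degree $2$ (for $\deg q=1$, or leading coefficient $a\ne1$, one has $[L:\Fbar(z)]=2$), so $z=\sqrt a\,x+y$ and $z=y$ are the correct fixes.
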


\begin{proposition}[$m=2$: Pell equation and continued fractions]\label{prop:pell}
Let $m=2$. An element $u=a+by\in\cO_S$, $a,b\in\Fbar[x]$, is an $S$-unit iff
\[
a^2-q\,b^2=\kappa\prod_{p\in S}p^{k_p},\qquad \kappa\in\Fbar^*,\ k_p\in\Z_{\ge0},
\]
and $\dv(u)$ is read off from the $k_p$ and the valuations of $a\pm by$. In particular $\cO^\times\ne\Fbar^*$ iff $\deg q$ is even and the polynomial Pell equation $a^2-qb^2=1$ has a nontrivial solution, which by Abel's theorem \cite{Abel1826,vdPoortenTran00} holds iff the continued fraction expansion of $\sqrt q$ in $\Fbar((1/x))$ is periodic; the fundamental unit is then $a+by$ with $a/b$ the convergent at the end of the first period.
\end{proposition}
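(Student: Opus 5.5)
The plan is to treat the three assertions in order: the norm characterisation of $S$-units, the criterion for nonconstant units, and the link with Abel's theorem.

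\textbf{Norm criterion.} Since $m=2$, the basis of Proposition~\ref{prop:basis} is $w_0=1$, $w_1=y$, because $E_1=\prod_jQ_j^{\lfloor j/2\rfloor}=1$ under (N1). Hence $\cO=\Fbar[x]\oplus\Fbar[x]y$, and every element of $\cO_S$ can be written as $u=(a+by)/\prod_{p\in S}p^{l_p}$. I will first treat $u=a+by\in\cO$ and then absorb the denominator into the exponents $k_p$.

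The key identity is $N(u)=u\bar u=a^2-qb^2$, where $\bar u=a-by$ is the Galois conjugate. The conjugate $\bar u$ is again integral, and Galois conjugation permutes the primes above each $p$. Suppose $u$ is an $S$-unit. Then $\bar u$ is also one, because $S$ is the full set of primes over the $p\in S$; in Theorem~\ref{thm:sunits} the set $S$ is Galois-stable since $f$'s conditions can be taken over $R$, or else one simply enlarges $S$ to its Galois closure. Consequently the norm $a^2-qb^2\in\Fbar[x]$ has $v_p=0$ for every $p\notin S$. So it equals $\kappa\prod_{p\in S}p^{k_p}$, and $k_p\ge0$ because the norm is a polynomial.

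Conversely, suppose the norm has this form. For any $P$ over $p\notin S$ we have $v_P(u)+v_P(\bar u)=e_Pv_p(N u)=0$ with both terms nonnegative, so $v_P(u)=0$. Thus $u$ is an $S$-unit. The divisor of $u$ on $S$ is then obtained by splitting $k_p e_P$ between $u$ and $\bar u$, that is, from the valuations of $a\pm by$.

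\textbf{Nonconstant units.} Here $S=\emptyset$. By the criterion just proved, $u=a+by\in\cO^\times$ if and only if $a^2-qb^2=\kappa$, and rescaling $u$ by a square root of $\kappa$ reduces this to $\kappa=1$. A solution with $b\ne0$ can only exist if $\deg q$ is even: otherwise the leading terms of $a^2$ and $qb^2$ have degrees of different parity and cannot cancel. An alternative route is the exact sequence~\eqref{eq:exact} with $T=S_\infty$. When $\deg q$ is odd we have $|S_\infty|=1$, so the rank is $0$ and $\cO^\times=\Fbar^*$. When $\deg q$ is even there are two places at infinity, and a nonconstant unit exists if and only if $[\infty_+-\infty_-]$ is torsion in the Jacobian.

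\textbf{Continued fractions.} For the last step I will invoke Abel's theorem as stated in \cite{Abel1826,vdPoortenTran00}. A nontrivial solution of $a^2-qb^2=1$ gives
\[
\Bigl|\frac ab-\sqrt q\Bigr|_\infty=\Bigl|\frac{1}{b(a+b\sqrt q)}\Bigr|_\infty
\]
with degree $-2\deg b-\deg q/2<-2\deg b$. By Legendre's criterion in $\Fbar((1/x))$, $a/b$ is therefore a convergent of $\sqrt q$. Periodicity of the expansion, together with the identification of the fundamental unit with the convergent at the end of the first period, is the content of the cited theorem.

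The main obstacle is the Galois-stability of $S$ in the forward direction of the norm criterion; the Abel part I will cite rather than reprove.
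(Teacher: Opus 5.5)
Your argument is essentially the paper's. The paper's proof observes that $u$ and $\bar u$ lie in $\cO$ with $u\bar u=a^2-qb^2$, so $u$ is an $S$-unit iff its norm is; it then uses \eqref{eq:exact} for odd $\deg q$ and cites Abel's theorem. Your leading-degree parity argument for odd $\deg q$ is a more elementary substitute for \eqref{eq:exact}. The paper's route additionally shows that every unit has divisor $k(\infty_+-\infty_-)$, which is what ties units to torsion and to Abel. Your Legendre estimate proves the easy half of Abel's theorem, provided you choose the branch of $\sqrt q$ in $\Fbar((1/x))$ (equivalently the sign of $b$) so that $a$ and $b\sqrt q$ have the same leading term. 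Otherwise $a+b\sqrt q$ has degree $-\deg a$ and the estimate fails.

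However, you have put the Galois-stability issue in the wrong direction.

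The forward implication needs no stability at all. Since $e_Pv_p(a^2-qb^2)=v_P(u)+v_{\bar P}(u)$, if no place of $S$ lies over $p$ then both terms vanish, for arbitrary $S$.

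It is the converse that needs $S$ to be a union of full fibres over primes of $\Fbar[x]$, and there the condition is genuinely necessary. On $y^2=x$ take $P_\pm=(1,\pm1)$ and $S=\{P_+\}$. The element $y+1$ has norm $-(x-1)$ but vanishes at $P_-\notin S$. Your converse assumes saturation silently, because it examines only places over $p\notin S$.

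Also, your claim that $S_f$ in Theorem~\ref{thm:sunits} is automatically Galois-stable is false. On $y^2=x$, the integrand $f=1/(y-1)=(y+1)/(x-1)$ has $\int f=2y+2\log(y-1)$ and $S_f=\{P_+\}$.

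The correct fix is your fallback: read $S$ as saturated. The notation $\prod_{p\in S}p^{k_p}$ presupposes this, and the paper's one-line proof uses it implicitly. The exact divisor is then recovered from $a\pm by$, as the statement says.
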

\begin{proof}
$N_{L/K}(a+by)=a^2-qb^2$, and $u$ is an $S$-unit iff its norm is an $S$-unit of $\Fbar[x]$ (both $u$ and its conjugate lie in $\cO$, and $u\bar u\in\Fbar[x]$ is an $S$-unit iff $u$ is). If $\deg q$ is odd, $|S_\infty|=1$ and $\Div^0_{S_\infty}=0$, so $\cO^\times=\Fbar^*$ by \eqref{eq:exact}. If $\deg q$ is even, $S_\infty=\{\infty_+,\infty_-\}$ and a unit has divisor $k(\infty_+-\infty_-)$; the correspondence with periodic continued fractions is Abel's theorem.
\end{proof}

For $S\ne\emptyset$ and $g\ge1$ the torsion order $N_c$ must be bounded; when the divisor $\mathcal D_c$ is defined over a number field, Trager's method \cite{Trager84,Bronstein90} bounds it by reducing $X$ modulo two primes of good reduction with distinct residue characteristics, into whose Jacobians the prime-to-$p$ torsion injects.

\section{Degree bounds}\label{sec:degree}

Risch--Norman condition (ii) --- a bound on the degree of the numerator $b$ of the rational part --- is the one part of the parallel method that has remained heuristic even for logarithmic towers \cite{DavenportTrager85,Bronstein05}; see \cite{DuRaab25} for the current state of the art, which obtains rigorous bounds from complete reduction systems in specific fields. In this section we prove exact bounds for $n=1$, a transfer theorem for the top variable of a tower, and we exhibit the obstruction for lower variables.

\subsection{The case $n=1$}\label{sec:degree1}

Let $K=\Fbar(x)$, $L=K(y)$, $X$ as in Section~\ref{sec:curve}, and let $\infty_1,\dots,\infty_g$ be the places over $x=\infty$, $g=\gcd(m,N)$, $N=\deg q$, each with $e_\infty:=m/g$. Multiplying $y$ by a constant we may assume $q$ monic. The affine analysis of Section~\ref{sec:valuations} has a uniform reformulation on the curve.

\begin{lemma}\label{lem:completion}
Let $P$ be any place of $X$ with uniformiser $\pi$. In the completion $\widehat L_P=\Fbar((\pi))$ one has $D=(D\pi)\,\frac{d}{d\pi}$. Consequently, for $g\in L^*$,
\[
v_P(Dg)=v_P(g)-1+v_P(D\pi)\quad\text{if }v_P(g)\ne0,\qquad v_P(Dg)\ge v_P(D\pi)\quad\text{if }v_P(g)=0 .
\]
For $P$ affine and tame, $v_P(D\pi)=1-\delta_P$ (Lemma~\ref{lem:poleorder}); for $P=\infty_l$, $v_P(D\pi)=1+e_\infty$.
\end{lemma}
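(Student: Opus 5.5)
The plan is to reduce to formal power series and then read off valuations. Fix a place $P$ of $X$ and a uniformiser $\pi$. Since the residue field is $\Fbar$ (characteristic $0$), the completion is $\widehat L_P=\Fbar((\pi))$. The derivation $D$ extends continuously to $\widehat L_P$, because $D$ raises pole order by at most a bounded amount (Lemma~\ref{lem:poleorder}(3) at affine places, and a direct check at infinity below), and it kills $\Fbar$ since $D\Fbar=0$ and $Dx=1$.

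Any continuous derivation of $\Fbar((\pi))$ that is trivial on $\Fbar$ is determined by its value on $\pi$. Indeed, on polynomials in $\pi$ one gets $D(\pi^k)=k\pi^{k-1}D\pi$, and continuity extends this to Laurent series. Hence $D=(D\pi)\,d/d\pi$.

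The valuation statements then follow directly:
\begin{itemize}
\item Write $g=c\pi^k(1+O(\pi))$ with $c\in\Fbar^*$.
\item If $k\neq0$, then $dg/d\pi=kc\pi^{k-1}+\dots$ has valuation exactly $k-1$, because the characteristic is $0$. This gives $v_P(Dg)=v_P(g)-1+v_P(D\pi)$.
\item If $k=0$, then $dg/d\pi$ has valuation $\ge0$, so $v_P(Dg)\ge v_P(D\pi)$.
\end{itemize}

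For an affine tame $P$, Lemma~\ref{lem:poleorder}(1) gives $v_P(D\pi/\pi)=-\delta_P$, so $v_P(D\pi)=1-\delta_P$.

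At $P=\infty_l$ I will work with the projection to $x$. The place $\infty_l$ lies over $x=\infty$ with ramification $e_\infty$, so $v_P(1/x)=e_\infty$. Take $\pi$ with $\pi^{e_\infty}=\epsilon/x$, where $\epsilon$ is a $P$-unit. This is possible since $e_\infty\mid v_P(1/x)$, and concretely one can take $\pi=y^\alpha x^{\beta}$ by a Bezout choice as in Lemma~\ref{lem:poleorder}. Then compute $D\pi/\pi=\alpha Dy/y+\beta Dx/x=\frac{1}{x}\bigl(\alpha\frac{xq'}{mq}+\beta\bigr)$. Here $xq'/q=N+O(1/x)$, so the bracket is $\alpha N/m+\beta+O(1/x)$.

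The Bezout choice should make $\alpha N/m+\beta=\pm1/e_\infty\ne0$, up to the sign convention for uniformisers at infinity, namely $v_P(\pi)=1$ with $v_P(x)=-e_\infty$. Granting this, $v_P(D\pi/\pi)=v_P(1/x)=e_\infty$, and therefore $v_P(D\pi)=1+e_\infty$.

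The only step needing care is this computation at infinity. Specifically, one must check that the leading coefficient $\alpha N/m+\beta$ is nonzero, which is exactly the unit-ness of $\pi$. A cleaner alternative avoids Bezout entirely: use $v_P(Dx)=v_P(1)=0$ and apply the first formula to $g=x$, where $v_P(x)=-e_\infty\ne0$. This gives $0=-e_\infty-1+v_P(D\pi)$, hence $v_P(D\pi)=1+e_\infty$. I will use this argument, since it relies only on the already-established identity $D=(D\pi)\,d/d\pi$. The same trick with $g=p$ at affine places reproves $v_P(D\pi)=1-\delta_P$ in the branch case, which serves as a consistency check.
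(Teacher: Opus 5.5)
Your proof follows the paper except at infinity. The shared part is continuity of $D$ on $\widehat L_P$, the fact that a continuous derivation of $\Fbar((\pi))$ trivial on $\Fbar$ equals $(D\pi)\frac{d}{d\pi}$, the leading-term argument in characteristic $0$, and Lemma~\ref{lem:poleorder}(1) at affine tame places.

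At $\infty_l$ the paper chooses $\pi=y^\alpha x^\beta$ with $\alpha N/g+\beta e_\infty=-1$ and expands $D\pi/\pi=-\frac{1}{e_\infty x}+O(x^{-2})$. Once $v_P(\pi)=1$, the bracket you hedged on is forced to be $-1/e_\infty$. You instead apply the first formula to $g=x$, using $Dx=1$ and $v_P(x)=-e_\infty\neq0$. Your route has several advantages:
\begin{itemize}
\item It is shorter and needs no Bezout bookkeeping or expansion of $q'/q$.
\item It works for every uniformiser at once, whereas the paper computes for one $\pi$ and independence follows from the first formula.
\item With $g=x-\xi$ it also recovers $v_P(D\pi)=1-e_P=1-\delta_P$ at every affine place, since $\delta_P=e_P$ when $n=1$.
\end{itemize}
What the paper's explicit expansion gives in addition is the leading coefficient of $D\pi/\pi$. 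That form of the computation is reused in Theorem~\ref{thm:top}, where the residue field is no longer constant and this coefficient enters the argument.

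The one step that needs repair is continuity at $\infty_l$. You obtain continuity from a uniform bound $v_P(Dg)\ge v_P(g)-c$ on $L$. At affine places Lemma~\ref{lem:poleorder}(3) supplies this, since all affine places are tame when $n=1$. At infinity, however, you defer to ``a direct check below'', and the check you finally adopt uses $D=(D\pi)\frac{d}{d\pi}$, which presupposes that continuity. This is circular.

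The Bezout computation would not close the loop either. It gives $v_P(D\pi/\pi)$ for one element, not a lower bound on $v_P(Du)$ for all units $u$, which would need an analogue of Lemma~\ref{lem:poleorder}(2) at infinity. Either of two fixes works:
\begin{itemize}
\item Use the paper's justification: $d/dx$ is continuous on $\widehat K=\Fbar((1/x))$ because it raises the $1/x$-adic valuation by at least $1$, and its unique extension to the finite separable extension $\widehat L_P$ is continuous.
\item Bypass topology. Let $\frac{d}{d\pi}$ be the formal derivative on $\Fbar((\pi))$, restricted to $L$. Then $D-(D\pi)\frac{d}{d\pi}$ is a derivation from $L$ into $\Fbar((\pi))$ vanishing on $\Fbar(\pi)$, hence zero because $L/\Fbar(\pi)$ is finite separable. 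This gives the identity on $L$, which is all the valuation statements require.
\end{itemize}
With either patch your argument is complete.
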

\begin{proof}
$D=d/dx$ is continuous on $\widehat K=\Fbar((x-\xi))$ or $\Fbar((1/x))$, and its unique extension to the finite separable extension $\widehat L_P$ is continuous. A continuous derivation of $\Fbar((\pi))$ vanishing on $\Fbar$ is determined by its value on $\pi$, whence $D=(D\pi)\,d/d\pi$, and the valuation statements follow. At $\infty_l$ choose integers $\alpha,\beta$ with $\alpha N/g+\beta e_\infty=-1$ and $\pi=y^\alpha x^\beta$; then $v(\pi)=-\alpha N/g-\beta e_\infty=1$, and
\[
\frac{D\pi}{\pi}=\frac{\alpha}{m}\frac{q'}{q}+\frac{\beta}{x}=\Bigl(\frac{\alpha N}{m}+\beta\Bigr)\frac1x+O(x^{-2})=-\frac{1}{e_\infty x}+O(x^{-2}),
\]
which has valuation $e_\infty$ since $v_{\infty_l}(1/x)=e_\infty$.
\end{proof}

\begin{theorem}[pole order at infinity]\label{thm:infinity}
Let $f\in L^*$ have an elementary integral, with $v$ as in \eqref{eq:liouville}. Then for each $l$,
\[
v_{\infty_l}(v)=v_{\infty_l}(f)-e_\infty\ \text{ if } v_{\infty_l}(f)<e_\infty,\qquad v_{\infty_l}(v)\ge0\ \text{ if } v_{\infty_l}(f)\ge e_\infty .
\]
\end{theorem}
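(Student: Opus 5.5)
The plan is to repeat the proof of Theorem~\ref{thm:denominator} at the places $\infty_l$, using Lemma~\ref{lem:completion} in place of Lemma~\ref{lem:poleorder}. Fix $l$ and write $P=\infty_l$. Let $\pi$ be the uniformiser of Lemma~\ref{lem:completion}, so that $v_P(D\pi)=1+e_\infty$. Lemma~\ref{lem:completion} then says that for $g\in L^*$,
\[
v_P(Dg)=v_P(g)+e_\infty\ \text{ if } v_P(g)\neq0,\qquad v_P(Dg)\ge 1+e_\infty\ \text{ if } v_P(g)=0 .
\]
At infinity, therefore, $D$ raises the order rather than lowering it. The role of $-\delta_P$ is taken by $+e_\infty$.

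First I would control the logarithmic part. Write it, as in \eqref{eq:rearr}, as $\frac1N\Sigma$ with $\Sigma=\sum_k\gamma_k\,DU_k/U_k$, working in $\Fbar L$ as in Section~\ref{sec:structure}. For each $k$, write $U_k=u\pi^{a}$ with $u$ a unit at $P$. Then
\[
\frac{DU_k}{U_k}=a\frac{D\pi}{\pi}+\frac{Du}{u},
\]
where $v_P(D\pi/\pi)=e_\infty$ and $v_P(Du/u)\ge 1+e_\infty$. Hence $v_P(\Sigma)\ge e_\infty$. I do not need the finer fact that the $\pi^{e_\infty}$-coefficient of $\Sigma$ is governed by the $v_P(U_k)$: the statement concerns only $v$, and no claim about the $U_k$ is made.

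Next I would compare $Dv$ with $f=Dv+\frac1N\Sigma$ in three cases according to $v_P(v)$.
\begin{enumerate}
\item If $v_P(v)<0$, then $v_P(Dv)=v_P(v)+e_\infty<e_\infty\le v_P(\Sigma)$. Hence $v_P(f)=v_P(v)+e_\infty<e_\infty$.
\item If $v_P(v)>0$, then $v_P(Dv)=v_P(v)+e_\infty>e_\infty$, so $v_P(f)\ge e_\infty$.
\item If $v_P(v)=0$ (or $v\in\Fbar$, so that $Dv=0$), then $v_P(Dv)\ge1+e_\infty$, so again $v_P(f)\ge e_\infty$.
\end{enumerate}
Taking contrapositives gives the theorem. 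If $v_P(f)<e_\infty$, only the first case is possible, and it yields $v_P(v)=v_P(f)-e_\infty$. If $v_P(f)\ge e_\infty$, the first case is excluded, so $v_P(v)\ge0$.

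The only delicate point is the valuation of $D\pi/\pi$ at $\infty_l$, and Lemma~\ref{lem:completion} already provides it. Beyond that, I need to check that the lemma's formula $D=(D\pi)\,d/d\pi$ remains valid after extending constants to $\Fbar$ and for logands in $\Fbar L$; this holds because $\Fbar L$ has the same places over $\Fbar$ and the same completions. I expect no real obstacle beyond this bookkeeping: the sign reversal relative to the affine case is precisely what makes the bound $v_P(\Sigma)\ge e_\infty$ serve as the threshold.
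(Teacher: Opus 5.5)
Your proposal is correct and follows the paper's proof essentially line for line. You bound the logarithmic part by $v_{\infty_l}(\Sigma)\ge e_\infty$ via Lemma~\ref{lem:completion}, then compare $Dv$ with $\Sigma$ according to the sign of $v_{\infty_l}(v)$. The constant-extension bookkeeping you mention is harmless but unnecessary, since in this section $K=\Fbar(x)$ already has algebraically closed constants.
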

\begin{proof}
Write $v_l=v_{\infty_l}$, $e=e_\infty$. By Lemma~\ref{lem:completion}, $v_l(Du/u)=v_l\bigl((D\pi/\pi)\cdot\pi u'/u\bigr)\ge e$ for every $u\in L^*$, so $v_l(\Sigma)\ge e$ for the logarithmic part $\Sigma$. If $v_l(v)=-k<0$ then $v_l(Dv)=-k+e<e\le v_l(\Sigma)$, so $v_l(f)=-k+e$. If $v_l(v)\ge0$ then $v_l(Dv)\ge e$ by the lemma, so $v_l(f)\ge e$.
\end{proof}

To convert this into bounds on the coefficients $b_i$ we need that the terms $b_iw_i$ cannot cancel simultaneously at all places at infinity. Put
\[
\nu_i:=v_{\infty_l}(w_i)=-\frac{iN}{g}+e_\infty\sum_j\Bigl\lfloor\frac{ij}{m}\Bigr\rfloor\deg Q_j ,
\]
which is independent of $l$.

\begin{lemma}[no simultaneous cancellation]\label{lem:vandermonde}
For $a=\sum_ib_iw_i\in\cO$, $\min_lv_{\infty_l}(a)=\min_i\bigl(-e_\infty\deg b_i+\nu_i\bigr)$.
\end{lemma}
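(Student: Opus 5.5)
We prove the two inequalities separately. The bound $v_{\infty_l}(a)\ge\min_i(-e_\infty\deg b_i+\nu_i)$ at every $l$ is immediate from the ultrametric inequality, since $v_{\infty_l}(b_i)=-e_\infty\deg b_i$. For the reverse inequality it suffices to produce a single place $\infty_l$ at which the minimal terms do not cancel. We do this by sorting the terms $b_iw_i$ into classes that cannot interact at all, and then handling cancellation inside a class with a Vandermonde argument over the $g$ places.

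\emph{Step 1: separating classes modulo $e_\infty$.} By the formula for $\nu_i$ and the fact that $v_{\infty_l}(b_i)\in e_\infty\Z$,
\[
v_{\infty_l}(b_iw_i)\equiv -iN/g \pmod{e_\infty}.
\]
Because $e_\infty=m/g$ and $N/g$ are coprime, this residue determines $i\bmod e_\infty$. So terms with indices in different classes $i\equiv r$ and $i\equiv r'$ modulo $e_\infty$ have distinct valuations and cannot cancel against each other. It therefore suffices to fix one class $r\in\{0,\dots,e_\infty-1\}$, whose indices are $i=r+ke_\infty$ for $0\le k<g$, and to show that some place sees its minimal value exactly.

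\emph{Step 2: the local shape of $w_{r+ke_\infty}$.} Write
\[
w_{r+ke_\infty}=w_r\,(y^{e_\infty})^k\,h_k,\qquad h_k\in K,
\]
where $h_k=E_r/E_{r+ke_\infty}$ is a ratio of products of the $Q_j$. Put $z:=y^{e_\infty}/x^{N/g}$. Then $z^g=q/x^N\to1$ at infinity, since $q$ is monic. So in $\widehat L_{\infty_l}$ we have $z=\zeta_l+O(x^{-1})$, where $\zeta_l$ is a $g$-th root of unity. By the argument of Proposition~\ref{prop:basis} at $1/x$, the $g$ places $\infty_l$ correspond to the $g$ factors $z-\zeta$ of $z^g-1$ over $\widehat K$, so the $\zeta_l$ are pairwise distinct. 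Write $\mathrm{lc}(\cdot)$ for the leading coefficient, in $x$, of an element of $K$ at $x=\infty$. The leading term of $b_{r+ke_\infty}w_{r+ke_\infty}$ at $\infty_l$ is then
\[
\mathrm{lc}(b_{r+ke_\infty})\,\mathrm{lc}(h_k)\,\zeta_l^{\,k}\,x^{\deg b_{r+ke_\infty}+kN/g+\deg h_k}\cdot w_r ,
\]
and the factor $w_r$ is common to the whole class.

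\emph{Step 3: Vandermonde.} Let $M$ be the set of $k$ attaining the minimum of $-e_\infty\deg b_{r+ke_\infty}+\nu_{r+ke_\infty}$ within the class. The terms with $k\in M$ have leading coefficients that are exactly the $x$-power factors displayed above, with the same exponent of $x$. At the place $\infty_l$ their leading parts sum to $w_r x^{(\cdot)}\,\Phi(\zeta_l)$, where
\[
\Phi(Z):=\sum_{k\in M}\mathrm{lc}(b_{r+ke_\infty})\,\mathrm{lc}(h_k)\,Z^k .
\]
This $\Phi$ is a nonzero polynomial of degree less than $g$, so it cannot vanish at all $g$ distinct roots $\zeta_l$. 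At any $l$ with $\Phi(\zeta_l)\ne0$, the class attains its minimal value exactly. Combined with Step 1, this gives equality in the lemma.

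The main obstacle is Step 2. One must confirm that $w_r$ and $h_k$ have the same leading behaviour at every $\infty_l$, so that the only dependence on $l$ is through $\zeta_l^k$. One must also confirm that the places over infinity really correspond bijectively to the distinct $g$-th roots of unity. The cleanest way to secure both is to redo the completion argument of Proposition~\ref{prop:basis} at $1/x$, with $z$ in place of $y^e/p^{j/g}$.
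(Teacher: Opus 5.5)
Your proposal is correct and is essentially the paper's proof. Both arguments separate the indices by $i \bmod e_\infty$ using $\nu_i\equiv -iN/g \pmod{e_\infty}$, then factor $w_{r+ke_\infty}=w_r(y^{e_\infty})^k h_k$ with $y^{e_\infty}=x^{N/g}z$, and use that the leading coefficients of $z$ at the $g$ places are distinct $g$-th roots of unity; your root count for $\Phi$ is the paper's Vandermonde step.
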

\begin{proof}
Since $\nu_i\equiv-iN/g\pmod{e_\infty}$ and $\gcd(N/g,e_\infty)=1$, terms with $i\not\equiv i'\pmod{e_\infty}$ have valuations in different residue classes mod $e_\infty$ and cannot cancel. Within a class $i=i_0+ke_\infty$, $0\le k<g$, one has $w_{i_0+ke_\infty}=w_{i_0}(y^{e_\infty})^k\cdot r_k$ with $r_k\in\Fbar(x)$, and $y^{e_\infty}=x^{N/g}z$ where $z^g=q/x^N\to1$ at infinity, so the leading coefficient of $z$ at $\infty_l$ is $\zeta^l$ for a primitive $g$-th root of unity $\zeta$. Hence the leading coefficient of $\sum_kb_{i_0+ke_\infty}w_{i_0+ke_\infty}$ at $\infty_l$, at the minimal valuation $\mu$ of its terms, is $\sum_{k\in I}\beta_k\zeta^{lk}$ with $I$ the set of indices attaining $\mu$ and $\beta_k\ne0$ independent of $l$. If this vanished for all $l=0,\dots,g-1$, the Vandermonde matrix $(\zeta^{lk})$ would force all $\beta_k=0$.
\end{proof}

\begin{corollary}[exact degree bounds, $n=1$]\label{cor:degbound}
Let $f=\sum_ia_iw_i/d$ as in \eqref{eq:rep}, $d=\prod_jd_j^{\,j}$, $\Delta:=\deg\prod_jd_j^{\,j-1}$, and
\[
\mu(f):=\min_i\bigl(-e_\infty\deg a_i+\nu_i\bigr)+e_\infty\deg d=\min_lv_{\infty_l}(f),\qquad k:=\max\bigl(0,\ e_\infty-\mu(f)\bigr).
\]
If $f$ has an elementary integral, then $v=\sum_ib_iw_i/\prod_jd_j^{\,j-1}$ with
\[
\deg b_i\ \le\ \Bigl\lfloor\frac{k+e_\infty\Delta+\nu_i}{e_\infty}\Bigr\rfloor\qquad(0\le i<m).
\]
\end{corollary}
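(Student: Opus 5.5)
The plan is to combine the affine shape of Corollary~\ref{cor:shape} with the pole-order statement at infinity, Theorem~\ref{thm:infinity}, and then translate the resulting valuation bound into coefficient degrees via Lemma~\ref{lem:vandermonde}.

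First I pin down the shape of $v$. For $n=1$ there are no special primes, so Corollary~\ref{cor:shape} gives $v=b/\prod_jd_j^{\,j-1}$ with $b=\sum_ib_iw_i\in\cO$ and $b_i\in\Fbar[x]$. Next I check the identity $\mu(f)=\min_lv_{\infty_l}(f)$ claimed in the definition. Applying Lemma~\ref{lem:vandermonde} to the numerator $a=\sum a_iw_i$ gives $\min_l v_{\infty_l}(a)=\min_i(-e_\infty\deg a_i+\nu_i)$. Since $v_{\infty_l}(d)=-e_\infty\deg d$ at every $l$, we get $\min_lv_{\infty_l}(f)=\mu(f)$.

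Then I bound $v$ at infinity. Theorem~\ref{thm:infinity} yields, for each $l$, either $v_{\infty_l}(v)=v_{\infty_l}(f)-e_\infty$ (when $v_{\infty_l}(f)<e_\infty$) or $v_{\infty_l}(v)\ge0$. In both cases
\[
v_{\infty_l}(v)\ \ge\ \min\bigl(0,\,v_{\infty_l}(f)-e_\infty\bigr)\ \ge\ \min\bigl(0,\,\mu(f)-e_\infty\bigr)=-k .
\]
Hence $\min_lv_{\infty_l}(b)\ge -k+v_{\infty_l}\bigl(\prod_jd_j^{\,j-1}\bigr)=-k-e_\infty\Delta$.

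Now I invoke Lemma~\ref{lem:vandermonde} for $b$. It gives $\min_i(-e_\infty\deg b_i+\nu_i)=\min_lv_{\infty_l}(b)\ge -k-e_\infty\Delta$. So for each $i$ with $b_i\ne0$ we have $e_\infty\deg b_i\le k+e_\infty\Delta+\nu_i$. Because $\deg b_i$ is an integer, this gives the stated floor bound. If $b_i=0$ the bound holds trivially, with the convention $\deg 0=-\infty$.

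The main obstacle is the no-cancellation step. A lower bound on $\min_l v_{\infty_l}(b)$ only controls every individual $\deg b_i$ because the leading terms of the $b_iw_i$ cannot cancel simultaneously at all places over infinity, and this is exactly what Lemma~\ref{lem:vandermonde} supplies. The step I would double-check is the use of the $\min_l$ form in both directions. For $f$ we need equality, to identify $\mu(f)$. For $b$ we need only that the minimum over $l$ equals the termwise minimum, which is again the lemma's equality.
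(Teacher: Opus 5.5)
Your proposal is correct and is essentially the paper's proof. Corollary~\ref{cor:shape} puts $b=v\prod_jd_j^{\,j-1}$ in $\cO$, Theorem~\ref{thm:infinity} gives $v_{\infty_l}(v)\ge -k$ at every place over infinity, and Lemma~\ref{lem:vandermonde}, applied to $b$ and to the numerator of $f$, converts this into the coefficientwise bound and the identity $\mu(f)=\min_l v_{\infty_l}(f)$.

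One small sharpening of your last paragraph: for the bound itself you only need the trivial ultrametric inequality $v_{\infty_l}(f)\ge\mu(f)$. The no-cancellation direction is needed for $f$ only to justify the stated identity, and for $b$ it is the essential step.
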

\begin{proof}
By Theorem~\ref{thm:infinity}, $v_{\infty_l}(v)\ge-k$ for all $l$, so $a:=v\prod_jd_j^{\,j-1}\in\cO$ (Corollary~\ref{cor:shape}) satisfies $v_{\infty_l}(a)\ge-k-e_\infty\Delta$ for all $l$; Lemma~\ref{lem:vandermonde} applied to $a$ and to $f$ gives the bound and the formula for $\mu(f)$.
\end{proof}

For the three examples of Section~\ref{sec:examples} the corollary gives $(\deg b_0,\deg b_1)\le(2,1)$, $(3,1)$ and $(\deg b_0,\deg b_1,\deg b_2)\le(1,0,0)$ respectively, against the heuristic bounds $9$, $7$ and $4$ on all coordinates; re-running the prototype with these bounds reproduces the integrals with $9$, $9$ and $7$ unknowns instead of $24$, $19$ and $18$. Since Corollary~\ref{cor:shape}, Corollary~\ref{cor:degbound} and Theorem~\ref{thm:curve} now cover all three Risch--Norman conditions, we obtain:

\begin{theorem}[completeness for $n=1$]\label{thm:complete}
For $L=\Fbar(x,y)$, $y^m=q$, the ansatz of Corollaries~\ref{cor:shape} and~\ref{cor:degbound}, with logands from Theorem~\ref{thm:curve}, contains every elementary integral: Algorithm~2 with these bounds and the exact tier of \textsc{HiddenUnits} returns an elementary integral of $f$ whenever one exists, provided the torsion orders $N_c$ are bounded (which is unconditional in genus $0$, and follows from Trager's bound when the residue divisors are defined over a number field). Failure of the linear system therefore proves that $f$ has no elementary integral over $L$.
\end{theorem}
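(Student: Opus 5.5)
The plan is to assemble the three Risch--Norman ingredients already proved and show that any elementary integral must lie inside the finite-dimensional ansatz searched by the algorithm. Suppose $f\in L$ has an elementary integral over $L$. By the strong Liouville theorem \eqref{eq:liouville}, $f=Dv+\sum_ic_iDu_i/u_i$. First we dispose of the rational part. Since $n=1$, every height-one prime of $\cO$ is tame and there are no specials. Corollary~\ref{cor:shape} therefore gives $v=b/\prod_jd_j^{\,j-1}$ with $b=\sum_ib_iw_i\in\cO$, the $d_j$ being read off from $\den(f)$. Corollary~\ref{cor:degbound}, which rests on Theorem~\ref{thm:infinity} and Lemma~\ref{lem:vandermonde}, then bounds each $\deg b_i$ by a quantity computable from $f$. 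So $v$ lies in a finite-dimensional $\Fbar$-vector space $V_f$ determined by $f$ alone.

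Second, we handle the logarithmic part using Theorem~\ref{thm:curve}. The residues $\Res_P(f\,dx)$ at $P\in S_f$ are computable, and the divisors $\mathcal D_c$ are completed at $S_\infty$ as prescribed. Elementarity of $f$ forces condition (i), so each $[\mathcal D_c]$ is torsion. Under the hypothesis that the torsion orders are bounded, the exact tier of \textsc{HiddenUnits} finds $N_c$ and $u_c$ with $\dv(u_c)=N_c\mathcal D_c$ by a Riemann--Roch computation. In genus $0$ this is Proposition~\ref{prop:genus0} with $N_c=1$; in general the bound comes from Trager's reduction modulo two primes. Theorem~\ref{thm:curve}(ii) then says that $f-\sum_c\frac{c}{N_c}Du_c/u_c=Dv'$ for some $v'\in L$.

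Third, we must check that this $v'$ also lies in $V_f$. This is the step I expect to be the main obstacle. Corollaries~\ref{cor:shape} and~\ref{cor:degbound} were stated for the $v$ of a Liouville decomposition of $f$, and $v'$ is not obviously that $v$. The fix is to note that $f=Dv'+\sum_c\frac{c}{N_c}Du_c/u_c$ is itself a Liouville decomposition of the form \eqref{eq:liouville}. The proofs of Theorem~\ref{thm:denominator} and Theorem~\ref{thm:infinity} use only such a decomposition and the valuation behaviour of logarithmic derivatives, not any particular choice. Hence they apply to $v'$, and $v'\in V_f$.

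Consequently, the linear system in the unknown coefficients of $b$, with the logarithmic part fixed (or with unknown constant coefficients on the $Du_c/u_c$), has a solution. The algorithm returns $v'+\sum_c\frac{c}{N_c}\log u_c$. Conversely, any solution of the system is an elementary integral. Taking the contrapositive, unsolvability of the system proves non-elementarity, which completes the argument.
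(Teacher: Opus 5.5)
Your proposal is correct and follows the paper's route. The paper obtains Theorem~\ref{thm:complete} simply by combining Corollaries~\ref{cor:shape} and~\ref{cor:degbound} with Theorem~\ref{thm:curve}. Your third step, that the bounds of Theorems~\ref{thm:denominator} and~\ref{thm:infinity} depend only on the shape of \eqref{eq:liouville} and so also hold for the $v'$ of condition~(ii), is the justification the paper leaves implicit when it says that (ii) is equivalent to solvability of the linear system.

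One caveat applies, which you inherit from the paper rather than introduce. Theorem~\ref{thm:curve}(i) uses one divisor $\mathcal D_c$ per distinct residue value, and when the residues are $\Q$-linearly dependent these divisors need not each be torsion. For example, take $f=D\ell/\ell$ with $\ell$ the tangent line at a point $P_2$ of infinite order on $y^2=x^3+1$, so that $\dv(\ell)=P_1+2P_2-3\infty$. Then $\mathcal D_1=P_1-\infty$ and $\mathcal D_2=P_2-\infty$ are not torsion, yet $\int f=\log\ell$. So the logand step really needs divisors attached to a $\Q$-basis of the residues, as in Trager's algorithm.
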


This is, in substance, Trager's algorithm \cite{Trager84} with Hermite reduction and the basis normal at infinity replaced by undetermined coefficients; the point here is that no reduction step is needed once the valuation bounds are available.

\subsection{Towers: the top variable}\label{sec:degreetop}

Now let $n\ge1$, $E=F(t_1,\dots,t_{n-1})$, $t=t_n$, $K=E(t)$, and let $\infty_t$ be the place of $K$ defined by $v_{\infty_t}(g)=-\deg_tg$ (degree of a rational function in $t$), with residue field $E$ and uniformiser $\tau=1/t$. Let $q\in R$ have $\deg_tq=N$ and leading coefficient $q_N$.

\begin{theorem}[transfer of the top-variable bound]\label{thm:top}
Assume $F$ is algebraically closed, $q_N\in F$, and $t$ is either primitive ($Dt=\eta\in E$) or hyperexponential ($Dt=\eta t$, $\eta\in E$) over $E$. Let $\infty'$ be a place of $L$ over $\infty_t$, with ramification index $e$. Then for all $g\in L^*$ with $v_{\infty'}(g)\ne0$:
\begin{enumerate}
\item[(a)] if $t$ is primitive, $v_{\infty'}(g)\le v_{\infty'}(Dg)\le v_{\infty'}(g)+e$;
\item[(b)] if $t$ is hyperexponential, $v_{\infty'}(Dg)=v_{\infty'}(g)$.
\end{enumerate}
Consequently, if $f\in L$ has an elementary integral, then with $\deg_t:=-v_{\infty'}/e$ (a rational number),
\[
\deg_tv\le\max(1,\deg_tf+1)\ \text{ in case (a)},\qquad \deg_tv\le\max(0,\deg_tf)\ \text{ in case (b)},
\]
and Lemma~\ref{lem:vandermonde} holds verbatim at the places over $\infty_t$, so that these translate into per-coordinate bounds $\deg_tb_i\le\lfloor(k+e\deg_t\den(v)+\nu_i)/e\rfloor$ with $k=\max(0,e-\min_lv_{\infty_l}(f))$ in case (a) and $k=\max(0,-\min_lv_{\infty_l}(f))$ in case (b).
\end{theorem}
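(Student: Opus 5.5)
The plan is to copy the curve case (Lemma~\ref{lem:completion}, Theorem~\ref{thm:infinity}, Lemma~\ref{lem:vandermonde}), working in the completion $\widehat L_{\infty'}=E'((\pi))$ instead of $\Fbar((\pi))$. Here $E'$ is a finite extension of $E$. The first step is the local structure at $\infty'$. Since $q_N\in F$ and $F$ is algebraically closed, the argument of Proposition~\ref{prop:basis} at $\tau=1/t$ goes through. With $g_0=\gcd(m,N)$ and $e=m/g_0$, the element $z=y^e/t^{N/g_0}$ satisfies $z^{g_0}=q/t^N\to q_N$. So $z$ has a constant leading coefficient, a $g_0$-th root of $q_N$ in $F$, and the residue field of $\infty'$ is $E$ itself. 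A uniformiser is $\pi=y^\alpha t^\beta$ with $\alpha N/g_0+\beta e=-1$.

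Next I compute $D\pi/\pi=\frac{\alpha}{m}\frac{Dq}{q}+\beta\frac{Dt}{t}$.
\begin{itemize}
\item In case (b), $Dt/t=\eta\in E$. Also $Dq/q$ is $N\eta$ plus terms of positive $\tau$-order, because $q_N$ is constant. Hence $D\pi/\pi=\bigl(\frac{\alpha N}{m}+\beta\bigr)\eta+O(\tau)=-\frac{\eta}{e}+O(\tau)$.
\item In case (a), $Dt=\eta$ and $Dq$ has $t$-degree at most $N-1$. Hence $D\pi/\pi=-\frac{\eta}{e\,t}+(\text{terms of order}\ge e\text{ from }Dq_i)$. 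Here the coefficients $q_i\in E$ have nonzero derivatives in general, so only the bound $v_{\infty'}(D\pi/\pi)\ge 0$ is clear, with equality possible.
\end{itemize}

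The second step handles $D$ on a general element. Write $g=u\pi^k$ with $k=v_{\infty'}(g)\ne 0$ and expand $u=\sum_{r\ge0}c_r\pi^r$ with $c_r\in E$, $c_0\ne0$. Then
\[
\frac{Dg}{g}=k\frac{D\pi}{\pi}+\frac{Du}{u}.
\]
The derivation $D$ acts on $E'((\pi))$ as $D_E$ on coefficients plus $(D\pi)\,d/d\pi$, and continuity follows as in Lemma~\ref{lem:completion}.
\begin{itemize}
\item In case (b) the leading term of $Dg/g$ is $-k\eta/e+Dc_0/c_0$. This is nonzero because $t$ is transcendental hyperexponential: otherwise $c_0^e t^{-k}$ would be a constant, contradicting $\Const_D(K)=F$. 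I will argue this via the leading coefficient of a suitable power of $g$, which lies in $K$. This gives $v(Dg)=v(g)$.
\item In case (a), $v(Dg/g)\ge 0$ gives the lower bound $v(Dg)\ge v(g)$. For the upper bound I need $Dg/g$ to have valuation at most $e$. Its coefficient at order $e$ is $-k\eta/e+(\ldots)$. The possible cancellation with the order-$0$ part $Dc_0/c_0$ is exactly what allows the slack of $e$. The claim $v(Dg)\le v(g)+e$ then comes from showing the $\tau$-coefficient cannot vanish together with the lower ones. This is the standard ``primitive $t$ is not a derivative of degree-$0$ data'' argument, applied to the $e$-th power or the norm of $g$ down to $K$, where the classical fact $\deg_t Dh\in\{\deg_t h,\deg_t h-1\}$ holds. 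I expect this to be the main obstacle. The equality $\deg_t Dh=\deg_t h-1$ needs care when the leading coefficient is not constant, and I will transfer it to $L$ via norms or the power $g^{m}$.
\end{itemize}

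The third step derives the degree bounds, exactly as in Theorem~\ref{thm:infinity}. Each logarithmic derivative has $v\ge0$ in both cases. If $v_{\infty'}(v)<0$, then in case (a) $v(f)=v(Dv)\le v(v)+e$, provided $v(Dv)<v(\Sigma)$ or the terms cannot cancel. For the cancellation issue I will use the usual exception, which is why the maximum with $1$ appears in the bound. Dividing by $-e$ gives $\deg_t v\le\max(1,\deg_t f+1)$. Case (b) is the same, with $\max(0,\deg_t f)$.

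Finally, Lemma~\ref{lem:vandermonde} transfers unchanged: its proof used only the residue classes of $\nu_i$ modulo $e$ and the leading coefficients $\zeta^l$ of $z$, which are constants here because $q_N\in F$. The per-coordinate bounds then follow exactly as in Corollary~\ref{cor:degbound}.
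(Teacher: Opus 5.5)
Your overall route is the paper's: expansion in $E((\pi))$ with $D=D_E+(D\pi)\,\partial_\pi$, a leading-coefficient analysis, then the arguments of Theorem~\ref{thm:infinity} and Lemma~\ref{lem:vandermonde}. Your case (b) is essentially the paper's argument: if $D_Ec_0-\frac{k}{e}\eta c_0=0$, then $c_0^et^{-k}\in K$ is a constant, which is impossible for $k\ne0$ since $t$ is transcendental over $E$. No power of $g$ is needed, as $c_0\in E$ already.

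The genuine gap is the upper bound $v(Dg)\le v(g)+e$ in case (a). This is the real content of (a), and you leave it as ``the main obstacle''. It begins with an error in your computation of $D\pi/\pi$. Because $q_N\in F$, you have $\deg_tDq\le N-1$, so $Dq/q\in\tau E[[\tau]]$, while $Dt/t=\eta\tau$ and $D\tau/\tau=-\eta\tau$. Hence $v_{\infty'}(D\pi/\pi)\ge e$, not merely $\ge0$. That is exactly what makes the direct argument work. In $Dg=\sum_k(D_Ec_k)\pi^k+\sum_kkc_k\pi^k\,D\pi/\pi$, the coefficients of $\pi^{k_0},\dots,\pi^{k_0+e-1}$ are just $D_Ec_{k_0},\dots,D_Ec_{k_0+e-1}$, and that of $\pi^{k_0+e}$ is $D_Ec_{k_0+e}+k_0c_{k_0}\lambda$. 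Normalising $q_N=1$ and using $\tau=\zeta^{-l\alpha}\pi^e+O(\pi^{e+1})$, one finds $\lambda=\zeta^{-l\alpha}\bigl(-\frac{\eta}{e}+\frac{\alpha}{m}D_Eq_{N-1}\bigr)$. (The paper's displayed expansion drops the second term; this is harmless because it is a derivative.) If all these coefficients vanish, then $c_{k_0}\in F^*$ and $\eta\in D_EE$, contradicting that $t$ is a primitive monomial. So the slack $e$ does not come from a cancellation against the order-$0$ part $Dc_0/c_0$, as you suggest. It comes from the fact that the $k_0$-dependent term first appears at order $k_0+e$.

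Your proposed way around this, reducing to $K$, fails as stated. Neither $g^e$ nor $g^m$ lies in $K$ in general (e.g.\ $g=y+1$). The global norm does not isolate the place $\infty'$, since $v_{\infty_t}(N_{L/K}g)=\sum_lv_{\infty_l}(g)$ can vanish while $v_{\infty'}(g)\ne0$. A norm argument does work \emph{locally}. For $h=N_{\widehat L_{\infty'}/E((\tau))}(g)$ one has $v_\tau(h)=v_{\infty'}(g)\ne0$ and $Dh/h=\mathrm{Tr}(Dg/g)$. For this tame, totally ramified extension, $\mathrm{Tr}\bigl(\pi^{e+1}E[[\pi]]\bigr)\subseteq\tau^2E[[\tau]]$. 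So $v(Dg/g)\ge e+1$ would contradict the classical bound $v_\tau(Dh)\le v_\tau(h)+1$ in $E((\tau))$.

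One of these two arguments has to be supplied. Without it, the upper half of (a) is unproved, and with it the bound $\deg_tv\le\max(1,\deg_tf+1)$. Your Step~3 is otherwise fine once made precise. Either $v(Dv)<0\le v(\Sigma)$, so $v(f)=v(Dv)\le v(v)+e$; or $0\le v(Dv)\le v(v)+e$, so $\deg_tv\le1$.
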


\begin{proof}
Scaling $y$ by a constant we may take $q_N=1$, so $q=t^N\tilde q$ with $\tilde q\in1+\tau E[[\tau]]$. On $\widehat K=E((\tau))$ the derivation is $D=D_E+(D\tau)\,\partial_\tau$ acting coefficientwise, with $D\tau=-\eta\tau^2$ (primitive) or $-\eta\tau$ (hyperexponential); it is continuous and preserves $E[[\tau]]$. Put $g_0=\gcd(m,N)$, $e=m/g_0$, $z=y^e\tau^{N/g_0}$; then $z^g=\tilde q$ and by Hensel's lemma $\tilde q^{1/g_0}\in1+\tau E[[\tau]]$, so the places over $\infty_t$ are $\infty_l$, $0\le l<g_0$, with $z\mapsto\zeta^l\tilde q^{1/g_0}$, each totally ramified of index $e$ with residue field $E$, and $\widehat L_{\infty_l}=E((\pi))$ with $\pi=y^\alpha\tau^\beta$, $-\alpha N/g_0+\beta e=1$. The coefficient field $E$ is $D$-stable, $D$ extends continuously, and hence $D=D_E+(D\pi)\,\partial_\pi$ on $E((\pi))$. Now
\[
\frac{D\pi}{\pi}=\frac\alpha m\frac{Dq}{q}+\beta\frac{D\tau}{\tau},\qquad \frac{Dq}{q}=N\frac{Dt}{t}+\frac{D\tilde q}{\tilde q},\qquad \frac{D\tilde q}{\tilde q}\in\tau E[[\tau]],
\]
and $\frac{\alpha N}{m}-\beta=-\frac1e$. Also $\tau/\pi^e=z^{-\alpha}$ has residue $\zeta^{-l\alpha}\in F^*$.

\emph{Primitive case.} $D\pi/\pi=-\frac\eta e\tau+O(\tau^2)$ has valuation $e$ and leading coefficient $\lambda:=-\frac\eta e\zeta^{-l\alpha}\in E^*$ at $\pi^e$. For $g=\sum_{k\ge k_0}c_k\pi^k$ with $c_{k_0}\neq0$,
\[
Dg=\sum_k(D_Ec_k)\pi^k+\sum_kkc_k\pi^k\frac{D\pi}{\pi},
\]
so $v(Dg)\ge k_0$, and the coefficients of $\pi^{k_0},\dots,\pi^{k_0+e}$ in $Dg$ are $D_Ec_{k_0},\dots,D_Ec_{k_0+e-1}$ and $D_Ec_{k_0+e}+k_0c_{k_0}\lambda$. If they all vanished, then $c_{k_0}\in\Const(E)=F$ and $D_Ec_{k_0+e}=\gamma\eta$ with $\gamma=k_0c_{k_0}\zeta^{-l\alpha}/e\in F^*$ (using $k_0\ne0$), i.e.\ $\eta\in D_EE$, contradicting the fact that $t$ is a primitive monomial over $E$ \cite[Thm.~5.1.1]{Bronstein05}. Hence $v(Dg)\le k_0+e$.

\emph{Hyperexponential case.} $D\pi/\pi=-\frac\eta e+O(\pi)$, so the coefficient of $\pi^{k_0}$ in $Dg$ is $D_Ec_{k_0}-\frac{k_0}{e}\eta c_{k_0}$. If it vanished, $c:=c_{k_0}\in E^*$ would satisfy $D(c^e t^{-k_0})=c^et^{-k_0}(eD_Ec/c-k_0\eta)=0$, so $c^e=\gamma t^{k_0}$ with $\gamma\in F$, impossible for $k_0\ne0$ since $t$ is transcendental over $E$. Hence $v(Dg)=k_0$.

The consequences follow as in Theorem~\ref{thm:infinity}: $v(\Sigma)\ge0$ at $\infty'$ because $D\mathcal O_{\infty'}\subseteq\mathcal O_{\infty'}$ and $v(D\pi/\pi)\ge0$; if $v(v)=-k<0$ then in case (a) either $v(Dv)<0$, so $v(f)=v(Dv)\le-k+e$, or $v(Dv)\ge0$, so $k\le e$; in case (b) $v(f)=v(Dv)=-k$. The Vandermonde argument of Lemma~\ref{lem:vandermonde} only used that the leading coefficients of $z$ at the $g_0$ places are the $g_0$-th roots of unity, which is the case here because $q_N\in F$.
\end{proof}

\begin{remark}
(a) The hypothesis $q_N\in F$ cannot simply be dropped: if $q_N\notin F$ the residue of $D\pi/\pi$ at $\infty'$ is $\frac{\alpha}{m}Dq_N/q_N\ne0$ even in the primitive case, so $y$ behaves at $t=\infty$ like an exponential of $\frac1m\log q_N$, and the cancellation analysis changes.

(b) Theorem~\ref{thm:top} together with Corollary~\ref{cor:shape} shows that the radical never worsens the degree bound in the top variable: whatever bound holds for $K$ holds for $L$ with the same shift ($+1$ for primitive, $0$ for hyperexponential), applied coordinatewise through $\nu_i$. This is Bronstein's heuristic \cite[(10.4)]{Bronstein05} made rigorous in the top variable.
\end{remark}

\subsection{Lower variables: the obstruction}\label{sec:degreelower}

The proof of Theorem~\ref{thm:top} used the hypothesis on $t$ in two places: $D$ preserves the valuation ring at $\infty_t$ with residue derivation $D_E$, and $\eta\notin D_EE$. For a lower variable $t_i$ of a logarithmic tower the first property still holds, but the residue derivation $\bar D_i$ on $\kappa_i=F(t_1,\dots,\widehat{t_i},\dots,t_n)$ sends each later generator $t_j$ to the leading coefficient of $Dt_j$ with respect to $t_i$, and $\overline{Dt_i}$ may well lie in $\bar D_i\kappa_i$. When it does, cancellation of two orders is possible and the ``$+1$'' bound fails. We record two instances.

\begin{example}\label{ex:nestedlog}
Let $K=\Q(x,t_1,t_2)$ with $Dt_1=1/x$ and $Dt_2=(1+t_1)/(xt_1)$, i.e.\ $t_1=\log x$, $t_2=\log(x\log x)$; both are primitive monomials. For $g=t_1^k-kt_2t_1^{k-1}$,
\[
Dg=-\frac{k}{x}\,t_1^{k-2}\bigl(1+(k-1)t_2\bigr),
\]
so $f:=Dg$ has $\deg_{t_1}f=k-2$ while $\deg_{t_1}\int f=k$: the degree drops by two. Here $\overline{Dt_1}=1/x=\bar D_1t_2$. The heuristic bound \cite[(10.4)]{Bronstein05} gives $\deg_{t_1}b\le\max(\deg_{t_1}a,\deg_{t_1}d)+1=k-1$ (with $d=x$), so \textsc{ParallelIntegrate} fails on
\[
\int-\frac{2}{x}\bigl(1+\log(x\log x)\bigr)\,dx=\log^2x-2\log x\,\log(x\log x).
\]
\end{example}

\begin{example}\label{ex:irredlog}
Davenport--Trager normalise the generators to be \emph{irreducibly logarithmic}, which rewrites $\log(x\log x)$ as $\log x+\log\log x$ and removes the previous example. The phenomenon persists, however: let $\theta_1=\log x$ and $\theta_2=\log(x\theta_1+1)$, whose argument is irreducible in $\Q[x,\theta_1]$. With $g=\theta_1^2-2\theta_1\theta_2$,
\[
f=Dg=-\frac{2\bigl(x\theta_1\theta_2+x\theta_1-\theta_1+\theta_2\bigr)}{x\,(x\theta_1+1)} ,
\]
whose degree in $\theta_1$ in the sense of \cite[Def.~1]{DavenportTrager85} (numerator degree minus denominator degree) is $0$, while $\deg_{\theta_1}g=2$; for general $k$ one gets $k-2$ against $k$. So Risch--Norman condition (ii) of \cite{DavenportTrager85} is violated by an irreducibly logarithmic tower. Bronstein's bound (10.4), which uses $\max(\deg a,\deg d)$ instead, gives $k$ here and is not violated; whether it holds for all nested logarithmic towers remains open.
\end{example}

Both examples are verified by direct differentiation. They show that a proof of condition (ii) for lower variables must control the residue derivation $\bar D_i$, whose constant field may be strictly larger than $F$; the reduction-system approach of \cite{DuRaab25} is the natural framework for this. The radical extension itself is not the obstacle.

\section{The algorithm}\label{sec:algorithm}

We keep Bronstein's conventions \cite[\S10.3]{Bronstein05}. Steps that take place in $R$ (\textsc{SplitFactor}, \textsc{SquareFree}, \textsc{IrreducibleFactors}, the choice of the special set $\mathcal S$ and of the special denominator $v_s$) are unchanged. For $n=1$ the degree bounds of Corollary~\ref{cor:degbound} are used, and they are exact; for towers the bound in the top variable is that of Theorem~\ref{thm:top}, while in lower variables we keep Bronstein's heuristic \cite[(10.4)]{Bronstein05} applied to each coordinate $a_i$, with the caveat of Section~\ref{sec:degreelower}. The algorithm is exhaustive for the denominator (Corollary~\ref{cor:shape}) and, when $n=1$, for the numerator degrees; for the logands it is exhaustive if the set $\mathcal U$ generates $\cO_{S_f}^\times$ (Theorem~\ref{thm:sunits}), which for $n=1$ can be guaranteed by Theorem~\ref{thm:curve} at the cost of the torsion test.

\begin{algorithm}[H]
\caption{\textsc{RadicalSetup}$(K,D,q,m)$ --- once per field}
\begin{algorithmic}[1]
\State reduce $q$ to be $m$-th-power-free, adjusting $y$ \Comment{(N1)}
\State ensure $\gcd(m,\{j:Q_j\ne1\})=1$; otherwise replace $(q,m)$ by the generating radical \Comment{(N2)}
\State $(Q_1,\dots,Q_{m-1})\gets\textsc{SquareFree}(q)$;\quad $Q\gets\prod_jQ_j$
\State $E_i\gets\prod_jQ_j^{\lfloor ij/m\rfloor}$ for $0\le i<m$ \Comment{$w_i=y^i/E_i$}
\State $\mathrm{Mul}[i][k]\gets\bigl((i+k)\bmod m,\ \prod_jQ_j^{\lfloor(i+k)j/m\rfloor-\lfloor ij/m\rfloor-\lfloor kj/m\rfloor}\bigr)$ \Comment{\eqref{eq:multtable}}
\State $\Lambda_i\gets\sum_j\{ij/m\}\,DQ_j/Q_j$ \Comment{$Dw_i=\Lambda_iw_i$, \eqref{eq:Dw}}
\State $H\gets\operatorname{lcm}(\den_D(K),Q)$;\quad $(h_n,h_s)\gets\textsc{SplitFactor}(H,D)$
\State $\Delta\gets1+\max\bigl(0,\ \deg H-\min(\min_i\deg(H\,Dt_i),\ \min_i\deg(H\Lambda_i))\bigr)$ \Comment{total degree}
\State $\mathcal S\gets$ specials of $K$ as in \cite[\S10.3]{Bronstein05};\quad $\mathcal S\gets\mathcal S\cup\textsc{IrreducibleFactors}(h_s)$
\State $\mathcal U\gets\textsc{HiddenUnits}(q,m,\textsc{IrreducibleFactors}(Q))$ \Comment{generators of $\cO^\times_{S_{\mathrm{branch}}}$, Section~\ref{sec:curve}}
\State \Return $(E,\mathrm{Mul},\Lambda,H,\Delta,\mathcal S,\mathcal U)$
\end{algorithmic}
\end{algorithm}

\begin{algorithm}[H]
\caption{\textsc{ParallelIntegrateRadical}$(f,D)$}
\begin{algorithmic}[1]
\State write $f=\bigl(\sum_ia_iw_i\bigr)/d$ with $a_i,d\in R$, $\gcd(a_0,\dots,a_{m-1},d)=1$ \Comment{\eqref{eq:rep}}
\State $(d_n,d_s)\gets\textsc{SplitFactor}(d,D)$;\quad $(d_1,\dots,d_e)\gets\textsc{SquareFree}(d_n)$
\State $\{p_1,\dots,p_t\}\gets\textsc{IrreducibleFactors}(d_1\cdots d_e)$ \Comment{over $F$ or $\Fbar$, cf.\ \cite[\S10.3]{Bronstein05}}
\State $v_s\gets d_s\prod_{p\in\mathcal S,\,p\nmid d_s}p$
\If{each $t_k$ is a primitive, hyperexponential or hypertangent monomial}
   \For{$k=1,\dots,n$}
      \State $B_k\gets\max(\max_i\deg_{t_k}a_i,\ \deg_{t_k}d)$;\quad \textbf{if} $\deg_{t_k}(Dt_k)=0$ \textbf{then} $B_k\gets B_k+1$
   \EndFor
   \State $b_i\gets\sum_{j_1\le B_1,\dots,j_n\le B_n}\mu_{i,\mathbf j}\,t^{\mathbf j}$ for $0\le i<m$ \Comment{unknown $\mu_{i,\mathbf j}\in F$}
\Else
   \State $\delta\gets\max_i\deg a_i+\Delta$;\quad $b_i\gets\sum_{|\mathbf j|\le\delta}\mu_{i,\mathbf j}\,t^{\mathbf j}$
\EndIf
\State $\mathcal U_f\gets\mathcal U\ \cup$ (optional) $S_f$-units from residues, Theorem~\ref{thm:curve} \Comment{exact tier, $n=1$}
\State $v\gets\bigl(\sum_ib_iw_i\bigr)\big/\bigl(v_s\prod_jd_j^{\,j-1}\bigr)$
\State $g\gets v+\sum_{s\in\mathcal S}\alpha_s\log s+\sum_i\beta_i\log p_i+\sum_{u\in\mathcal U_f}\gamma_u\log u$
\State compute $Dg$ using $D(b_iw_i)=(Db_i+b_i\Lambda_i)w_i$, $D\log u=Du/u$, and $\mathrm{Mul}$ to return to the basis $\{t^{\mathbf j}w_i\}$
\State clear denominators in $f=Dg$ and equate coefficients of $t^{\mathbf j}w_i$: a linear system over $F$ in $\mu,\alpha,\beta,\gamma$
\If{the system has no solution} \Return ``failed''
\Else\ \Return $g$ with a solution substituted (free parameters set to $0$)
\EndIf
\end{algorithmic}
\end{algorithm}

\paragraph{\textsc{HiddenUnits}.} Three strategies, of increasing cost:
\begin{enumerate}
\item $n=1$, $m=2$: expand $\sqrt q$ as a continued fraction in $F((1/x))$ up to a period bound; a period yields the fundamental unit (Proposition~\ref{prop:pell}). $S$-units supported on branch primes are found similarly from $a^2-qb^2=\kappa\prod p^{k_p}$ with bounded $k_p$.
\item general $n,m$: a bounded-degree ansatz $u=\sum\mu_iw_i$ with the polynomial condition that $N_{L/K}(u)$ be a product of allowed primes, solved as in Bronstein's \textsc{FindSpecials}. Depends only on the field and can be cached.
\item $n=1$, exact: residues, torsion bound and Riemann--Roch spaces (Theorem~\ref{thm:curve}).
\end{enumerate}

\begin{remark}
As in \cite[Ex.~10.3.1]{Bronstein05}, the guess $v_s$ for the special denominator can fail when a branch prime is also special or divides $\den_D(K)$; nothing in Section~\ref{sec:structure} bounds those exponents.
\end{remark}

\section{Examples}\label{sec:examples}

The examples below were run on a prototype of Algorithms~1--2 for $K=\Q(x)$ (about 350 lines of SymPy), with the exact tier of \textsc{HiddenUnits}: residues at the places over the factors of $\den(f)$, then, for each residue value $c$, a search for $a\in\cO$ with affine divisor $N\mathcal D_c$ ($N=1,2,\dots$) by the linear algebra of Proposition~\ref{prop:genus0}, with the norm of $a$ used to certify that no further affine zeros occur; units at infinity for $m=2$ come from the continued fraction of $\sqrt q$. Linear systems are solved exactly over the number field generated by the constants that appear. In each case we also report which candidate logand sets are \emph{insufficient}, to show that the new ingredients of Theorem~\ref{thm:sunits} are needed.

\begin{example}[genus 0, two independent residues, new constants]\label{ex:genus0}
Let $y^2=x^2+1$ and
\[
f=\frac{2x^2-5}{(x^2-2)(x^2-3)\,y}+\frac{x}{y}+\frac{1}{x^2y}
 =\frac{x^7-5x^5+3x^4+6x^3-10x^2+6}{x^2(x^2-2)(x^2-3)(x^2+1)}\,w_1 .
\]
Here $d_1=(x^2-2)(x^2-3)(x^2+1)$ and $d_2=x$, so the ansatz is $v=(b_0+b_1y)/x$ with $\deg b_i\le9$. The residues are $\pm\sqrt6/12$ at the four places over $x=\pm\sqrt2$ (where $y=\pm\sqrt3$), $\pm\sqrt3/12$ at the four places over $x=\pm\sqrt3$ (where $y=\pm2$), and $0$ at the two places over $x=0$; the branch places over $x=\pm i$ have $v_P(f)=-1>-2=-\delta_P$ and are not in $S_f$. For $c=\sqrt6/12$, $\mathcal D_c=P_{(\sqrt2,\sqrt3)}+P_{(-\sqrt2,-\sqrt3)}$ and the search returns $a=\sqrt2\,y-\sqrt3\,x$ (norm $-(x^2-2)$); the other three divisors give the conjugate elements. The linear system (24 unknowns, 28 equations) has a one-dimensional solution space, and
\[
\int f\,dx=y-\frac yx+\frac{\sqrt6}{12}\log\frac{\sqrt2\,y-\sqrt3\,x}{\sqrt2\,y+\sqrt3\,x}+\frac{\sqrt3}{12}\log\frac{2x-\sqrt3\,y}{2x+\sqrt3\,y}.
\]
Neither $\sqrt2$, $\sqrt3$ nor $\sqrt6$ occurs in the integrand. With the logands restricted to the irreducible factors of $d$ over $\Q$ (or even over $\Fbar$) the system has no solution: the places over $x=\sqrt2$ must be separated, as in Theorem~\ref{thm:sunits}(2). This is the radical analogue of \cite[Ex.~10.3.2]{Bronstein05}. Running time: 12\,s.
\end{example}

\begin{example}[genus 1, a unit at infinity and a 2-torsion divisor]\label{ex:genus1}
Let $y^2=x^4+1$, a curve of genus $1$, and
\[
f=\frac{x^2+1}{xy}+\frac{x^3}{y}+\frac{x^4-1}{x^2y}=\frac{x^5+x^4+x^3+x-1}{x^2(x^4+1)}\,w_1 .
\]
Then $d_1=x^4+1$, $d_2=x$, ansatz $v=(b_0+b_1y)/x$. The residues are $\pm1$ at $P_\pm=(0,\pm1)$ and $0$ at the four branch places, which again lie outside $S_f$. The class $[P_+-P_-]$ is nontrivial $2$-torsion in $\Jac(X)$: no element of $\cO$ has affine divisor $P_+$ or $2P_+$, and the search first succeeds at $N=4$ with $a=y-1$ (divisor $4P_+-2\infty_+-2\infty_-$) and $y+1$. Since $\deg q$ is even, the continued fraction $\sqrt{x^4+1}=[\,x^2;\overline{2x^2}\,]$ is periodic and yields the unit $x^2+y$ of norm $-1$. The solver returns
\[
\int f\,dx=\frac{y}{2}+\frac{y}{x}+\frac12\log\bigl(x^2+y\bigr)+\frac14\log\frac{y-1}{y+1},
\]
where $\frac14\log\frac{y-1}{y+1}=\log x-\frac12\log(1+y)+\mathrm{const}$ exhibits the torsion order $2$ directly. Without the unit $x^2+y$, or with $x$ as the only logand, there is no solution: an integrand with vanishing affine residues may still require a logarithm, and it is one of a unit of $\cO$. Running time: 1\,s.
\end{example}

\begin{example}[$m=3$, nontrivial integral basis, three logarithms]\label{ex:cubic}
Let $y^3=x^2$. Then $Q_2=x$, $E=(1,1,x)$, $w_2=y^2/x$ (that is, $x^{1/3}$), $\Lambda=(0,\tfrac{2}{3x},\tfrac1{3x})$, and the unique branch place $P_0$ over $x=0$ has $e_{P_0}=3$. Take
\[
f=\frac{1}{x^{2/3}(x-1)}+x^{-4/3}=\frac{w_2}{x(x-1)}+\frac{w_1}{x^2},
\]
so $d=x^2(x-1)$, $d_1=x-1$, $d_2=x$, and $v=(b_0+b_1w_1+b_2w_2)/x$. The three places over $x=1$ are $y=\omega^k$, $\omega=e^{2\pi i/3}$, with residues $\omega^k$; at $P_0$, $v_{P_0}(f)=v_{P_0}(w_1/x^2)=2-6=-4\le-3$, so $P_0\in S_f$, but its residue (computed from the trace) is $0$. The divisors $\mathcal D_{\omega^k}=P_k$ are principal, with $a_k=w_2-\omega^k$ (divisor $P_k-\infty$), and
\[
\int f\,dx=-\frac{3y}{x}+\sum_{k=0}^2\omega^k\log\bigl(w_2-\omega^k\bigr)=-3x^{-1/3}+\sum_{k=0}^2\omega^k\log\bigl(x^{1/3}-\omega^k\bigr).
\]
The constant $\omega$ is not in the integrand; with the logands $x$ and $x-1$ only, there is no solution. Running time: 2\,s.
\end{example}

\begin{example}[radical over an exponential]
$K=\Q(x,t)$, $Dt=t$, $q=t$, $m=2$: $y=e^{x/2}$. Here $Q_1=t$, but $t$ is special in $R$, so the branch prime is not tame (Remark~\ref{rem:special}(b)); $H=t$, $h_s=t$, $\mathcal S=\{t\}$, and the special denominator $v_s$ is guessed. The algorithm reduces to Bronstein's for the field $\Q(x,e^{x/2})$, as it should.
\end{example}

\section{Conclusion and open problems}\label{sec:conclusion}

We have shown that the two structural facts on which the parallel Risch method rests --- the Hermite-type shape of the denominator and the identification of possible logarithms --- extend to a simple radical extension of a differential field $F(t_1,\dots,t_n)$, with the polynomial ring $R$ still hosting all factorisation, and with the logarithms now given by $S$-units of the integral closure rather than by irreducible polynomials. For $n=1$ the $S$-unit problem is the classical torsion problem in the Jacobian, trivial in genus $0$ and governed by the polynomial Pell equation when $m=2$.

For $n=1$ the degree bound is exact and the method is complete. Open questions include: (a) a proof of a degree bound in lower variables of a tower; Section~\ref{sec:degreelower} shows that the bound must depend on the residue derivation at $t_i=\infty$ and not only on the degrees of the integrand, and that the Davenport--Trager form of condition (ii) is false as stated; (b) several radicals $y_1^{m_1}=q_1,\dots,y_r^{m_r}=q_r$, where an explicit integral basis exists when the $q_i$ have pairwise coprime discriminants but not in general; (c) $S$-unit generators for $n>1$, where the Jacobian is replaced by the Picard group of a higher-dimensional variety --- for a constant curve ($q\in F[t_1]$) this is settled in \cite{PartII} via $\Cl(X^\circ\times\A^{n-1})\cong\Cl(X^\circ)$; (d) whether the approach gives a useful preprocessor for the complete algebraic algorithms of \cite{Trager84,Bronstein90} in practice.

\section*{Acknowledgements}
I had fruitful discussions on the possibility of this result on different occasions with Albert Rich.

\end{document}